\newif{\ifFull}
\newtheorem{observation}{Observation}
\let\doendproof\endproof
\renewcommand\endproof{~\hfill$\qed$\doendproof}
\DeclareMathOperator{\hull}{\mathsf{hull}}
\DeclareMathOperator{\lowersets}{\mathsf{down}}
\DeclareMathOperator{\project}{\mathsf{project}}
\DeclareMathOperator{\extremes}{\mathsf{extremes}}
\DeclareMathOperator{\subproblem}{\mathsf{subproblem}}
\DeclareMathOperator{\freeset}{\mathsf{free}}
\DeclareMathOperator{\below}{\mathsf{below}}
\DeclareMathOperator{\powerset}{\mathsf{powerset}}
\DeclareMathOperator{\polygon}{\mathsf{polygon}}
\title{The Parametric Closure Problem}
\author{David Eppstein}
\institute{Computer Science Department, Univ. of California, Irvine, USA}
\begin{document}
\maketitle

\begin{abstract}
We define the \emph{parametric closure problem}, in which the input is a partially ordered set whose elements have linearly varying weights and the goal is to compute the sequence of minimum-weight downsets of the partial order as the weights vary. We give polynomial time solutions to many important special cases of this problem including semiorders, reachability orders of bounded-treewidth graphs, partial orders of bounded width, and series-parallel partial orders. Our result for series-parallel orders provides a significant generalization of a previous result of Carlson and Eppstein on bicriterion subtree problems.
\end{abstract}

\section{Introduction}
\emph{Parametric optimization} problems are a variation on classical combinatorial optimization problems such as shortest paths  or minimum spanning trees, in which the input weights are not fixed numbers, but vary as functions of a parameter. Different parameter settings will give different weights and different optimal solutions; the goal is to list these solutions and the intervals of parameter values within which they are optimal. As a simple example, consider maintaining the minimum of $n$ input values, which change as a parameter controlling these values changes. This parametric minimum problem can be formalized as the problem of constructing the \emph{lower envelope} of  functions that map the parameter value to each input value.
When these are linear functions this is the problem of constructing the lower envelope of lines,
equivalent by projective duality to a planar convex hull~\cite{Mat-LDG-LE-02}; the lower envelope has linear complexity and can be constructed in time $O(n\log n)$.

As well as the obvious applications of parametric optimization to real-world problems with time-varying but predictable data (such as rush-hour route planning), parametric optimization problems have another class of applications, to \emph{bicriterion optimization}. In bicriterion problems, each input element has two numbers associated with it. A solution value is obtained by summing the first number for each selected element, separately summing the second number for each selected element, and evaluating a nonlinear combination of these two sums.
 For instance, each element's two numbers might be interpreted as the $x$ and $y$ coordinates of a point in the plane associated with the element, and we might wish to find the solution
 whose summed $x$ and $y$ coordinates give a point as close to the origin as possible.
 The two numbers might represent  the mean and variance of a normal distribution, and we might wish to optimize some function of the summed distribution. The two numbers on each element might represent an initial investment cost and expected profit of a business opportunity, 
and we might wish to find a feasible combination of opportunities that maximizes the return on investment. Or, the two numbers might represent the cost and log-likelihood of failure of a communications link, and we might wish to find a low-cost communications network with high probability of success. Many natural bicriterion optimization problems of this type can be expressed as finding the maximum of a \emph{quasiconvex function} of the two sums (a function whose lower level sets are convex sets) or equivalently as finding the minimum of a quasiconcave function of the two sums. When this is the case, the optimal solution can always be obtained as one of the solutions of a parametric problem on only one variable, defined by re-interpreting the two numbers associated with each input element as the slope and $y$-intercept of a linear function that gives the weight of that element (a single number) as a function of the parameter~\cite{Kat-IEICE-92,CarEpp-SWAT-06}. In this way, any algorithm for solving a parametric optimization problem can also be used to solve bicriterion versions of the same type of optimization problem. Even though the bicriterion problem might combine its two numbers in a nonlinear way, the corresponding parametric problem uses linearly varying edge weights.

In this paper we formulate and provide the first study of the \emph{parametric closure problem}, the natural parametric variant of a classical optimization problem, the \emph{maximum closure problem}~\cite{Pic-MS-76,Hoc-MS-04}.

\subsection{Closures and parametric closures}
A closure in a directed graph is a subset of vertices such that all edges from a vertex in the subset go to another vertex in the subset; the maximum closure problem is the problem of finding the highest-weight closure in a vertex-weighted graph. Equivalently we seek the highest-weight \emph{downset} of a weighted partial order, where a downset is a subset of the elements of a partial order such that if $x<y$ in the order, and $y$ belongs to the subset, then $x$ also belongs to the subset. A partial order can be converted to an equivalent directed graph by considering each element of the order as a vertex, with an edge from $y$ to $x$ whenever $x<y$ in the order (note the reversed edge direction from the more usual conventions). In the other direction, given a directed graph, one may obtain an equivalent partial order on the strongly connected components of the graph, where component $x$ is less than component $y$ in the partial order whenever there is a path from a vertex in $y$ to a vertex in $x$ in the graph.

One of the classical applications of this problem involves open pit mining~\cite{LerGro-TCIMM-65}, where the vertices of a directed graph represent blocks of ore or of covering material that must be removed to reach the ore, edges represent ordering constraints on the removal of these blocks, and the weight of each vertex represents the net profit or loss to be made by removing that block. Other applications of this problem include military attack planning~\cite{Orl-NRLQ-87}, freight depot placement~\cite{Bal-MS-70,Rhy-MS-70}, scheduling with precedence constraints~\cite{Law-ADM-78,ChaEdm-Ord-85}, image segmentation~\cite{GibHanSon-ISAAC-11,AhmChoGib-WADS-13}, stable marriage with maximum satisfaction~\cite{IrvLeaGus-JACM-87}, and treemap construction in information visualization~\cite{BucEppLof-WADS-11}. Maximum closures can be found in polynomial time by a reduction to maximum flow~\cite{Pic-MS-76,Hoc-Nw-01} or by direct algorithms~\cite{FaaKimSch-MS-90}.

\begin{figure}[t]
\centering\includegraphics[width=0.75\textwidth]{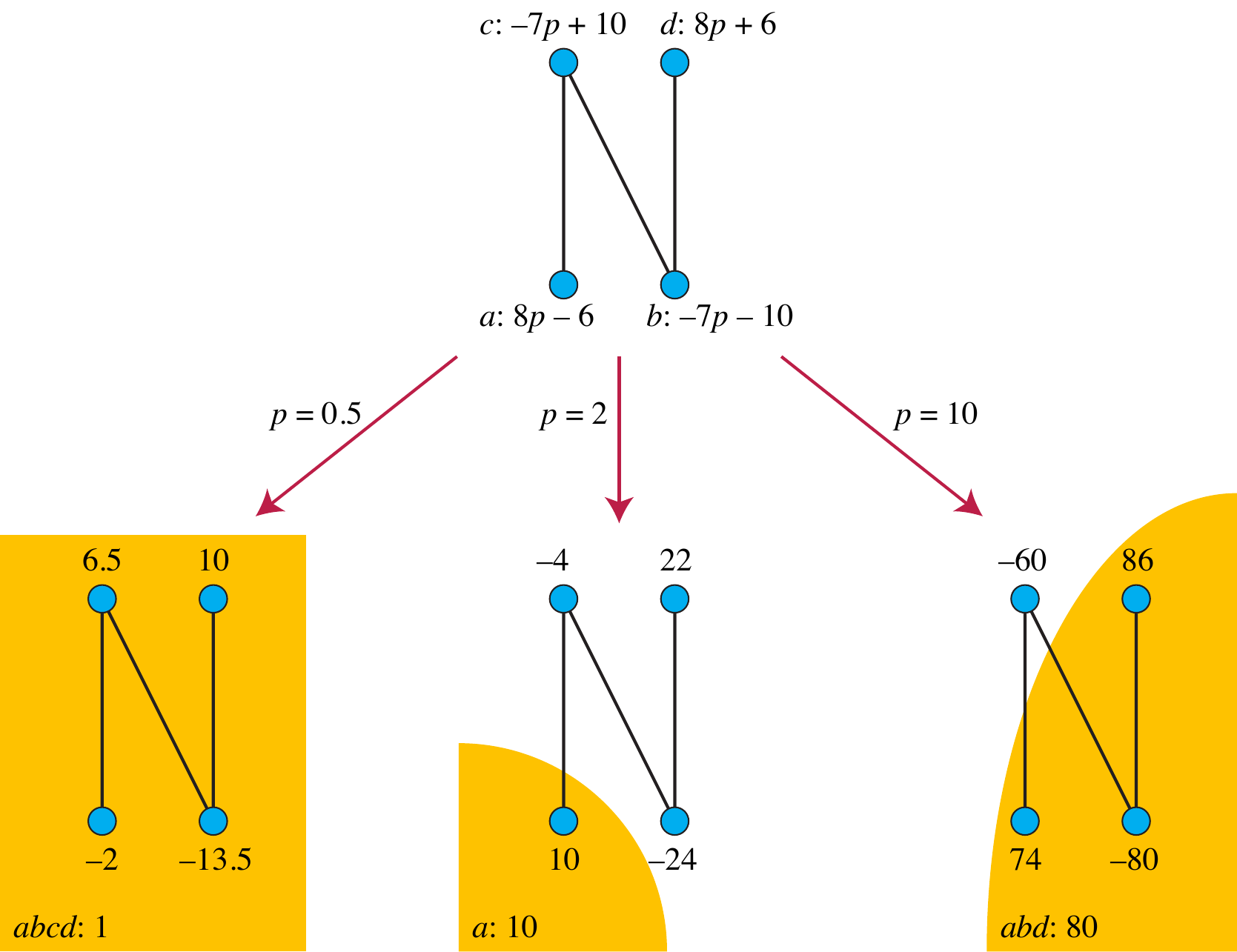}
\caption{In a partially ordered set with weights that vary linearly as a function of the parameter~$p$, different choices of $p$ lead to different maximum-weight downsets.}
\label{fig:N2}
\end{figure}

In the parametric closure problem, we assign weights to the vertices of a directed graph (or the elements of a partial order) that vary linearly as functions of a parameter, and we seek the closures (or downsets) that have maximum weight for each possible value of the parameter (\autoref{fig:N2}). For instance, in the open pit mining problem, the profit or loss of a block of ore will likely vary as a function of the current price of the refined commodity produced from the ore. So, a parametric version of the open pit mining problem can determine a range of optimal mining strategies, depending on how future commodity prices vary.  As described above, an algorithm for parametric closures can also solve bicriterion closure problems of maximizing a quasiconvex function (or minimizing a quasiconcave function) of two sums of values. Although we have not been able to resolve the complexity of the parametric closure problem in the general case, we prove near-linear or polynomial complexity for several important special cases of this problem.

\subsection{Related work}

Previous work on parametric optimization has considered parametric versions of two standard network optimization problems, the minimum spanning tree problem and the shortest path problem. The parametric minimum spanning tree problem (with linear edge weights) has polynomially many solutions that can be constructed in polynomial time~\cite{Epp-DCG-98,FerSluEpp-NJC-96,AgaEppGui-FOCS-98}. In contrast, the parametric shortest path problem is not polynomial, at least if the output must be represented as an explicit list of paths: it has a number of solutions and running time that are exponential in $\log^2 n$ on $n$-vertex graphs~\cite{Car-TR-84}.

We do not know of previous work on the general parametric closure problem, but two previous papers can be seen in retrospect as solving special cases:
\begin{itemize}
\item Lawler~\cite{Law-ADM-78} studied scheduling to minimize weighted completion time with precedence constraints. He sought the closure that maximizes the ratio $x/y$ of the priority $x$ and processing time $y$ of a job or set of jobs, and used this closure to decompose instances of this problem into smaller subproblems. Instead of using the reduction from bicriterion to parametric problems, Lawler showed that the optimal closure can be found in polynomial time by a binary search where each step involves the solution of a weighted closure problem. Replacing the binary search by Megiddo's parametric search~\cite{Meg-JACM-83} would make this algorithm strongly polynomial. Both search methods depend on the specific properties of the ratio function, however, and cannot be extended to other bicriterion problems.
\item
Carlson and Eppstein~\cite{CarEpp-SWAT-06} consider bicriterion versions of the problem of finding the best subtree (containing the root) of a given rooted tree with weighted edges. As they show, many such problems can be solved in time $O(n\log n)$. Although Carlson and Eppstein did not formulate their problem as a closure problem,
the root-containing subtrees can be seen as closures for a directed version of the tree in which each edge is directed towards the root. The reachability ordering on this directed tree is an example of a series-parallel partial order, and we greatly generalize the results of Carlson and Eppstein in our new results on parametric closures for arbitrary series-parallel partial orders.
\end{itemize}

\begin{figure}[t]
\centering\includegraphics[width=0.9\textwidth]{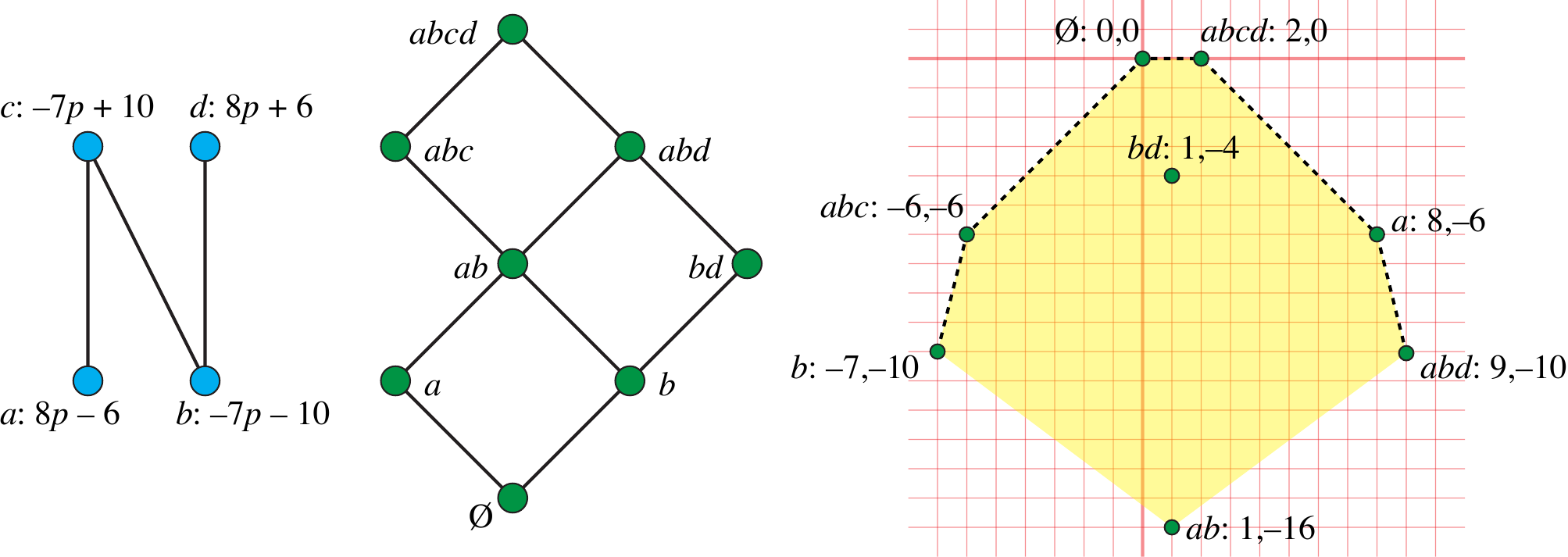}
\caption{An instance of the parametric closure problem. Left: The Hasse diagram of a partially ordered set $N$ of four elements, each with a weight that varies linearly with a parameter $p$. Center: The distributive lattice $\lowersets(N)$ of downsets of $N$. Right: The point set $\project(\lowersets(N))$ and its convex hull. The upper hull (dashed) gives in left-to-right order the sequence of six distinct maximum-weight  closures as the parameter $p$ varies continuously from $-\infty$ to $+\infty$.}
\label{fig:N}
\end{figure}

\subsection{Parametric optimization as an implicit convex hull problem}
Parametric optimization problems can be formulated dually, as problems of computing convex hulls of implicitly defined two-dimensional point sets (\autoref{fig:N}).
Suppose we are given a parametric optimization problem in which weight of element~$i$ is a linear function $a_i\lambda+b_i$ of a parameter $\lambda$, and in which the weight of a candidate solution $S$ (a subset of elements, constrained by the specific optimization problem in question) is the sum of these functions. Then the solution value is also a linear function, whose coefficients are the sums of the element coefficients:
\[
\sum_{i\in S} a_i\lambda + b_i = \left(\sum_{i\in S} a_i\right)\lambda+\left(\sum_{i\in S} b_i\right).
\]
Instead of interpreting the numbers $a_i$ and $b_i$ as coefficients of linear functions, we may re-interpret the same two numbers as the $x$ and $y$ coordinates (respectively) of points in the Euclidean plane.
In this way any family $\mathcal F$ of candidate solutions determines a planar point set, in which each set in $\mathcal F$ corresponds to the point given by the sum of its elements' coefficients. We call this point set $\project({\mathcal F})$, because the sets in $\mathcal F$ can be thought of as vertices of a hypercube $Q_n= \{0,1\}^n$ whose dimension is the number of input elements, and $\project$ determines a linear projection from these vertices to the Euclidean plane.

Let $\hull(\project({\mathcal F}))$ denote the convex hull of this projected planar point set.
Then for each parameter value the set in $\mathcal F$ minimizing or maximizing the parameterized weight corresponds by projective duality to a vertex of the hull, and the same is true for the maximizer of any quasiconvex function of the two sums of coefficients $a_i$ and $b_i$. Thus, parametric optimization can be reformulated as the problem of constructing this convex hull, and bicriterion optimization can be solved by choosing the best hull vertex.

\subsection{New results}
For an arbitrary partially ordered set $P$, define $\lowersets(P)$ to be the family of downsets of $P$. Let $P$ be parametrically weighted, so that $\project(\lowersets(P))$ is defined. As a convenient abbreviation, we define $\polygon(P)=\hull(\project(\lowersets(P)))$. This convex polygon represents the solution to the parametric closure problem for the given weights on $P$.

We consider the following classes of partially ordered sets. For each partial order $P$ in one of these classes,
we prove polynomial bounds on the complexity of $\polygon(P)$ and on the time for constructing $\polygon(P)$. These results imply the same time bounds for parametric optimization over~$P$ and for maximizing a quasiconvex function (bicriterion optimization) over $P$.

\begin{description}
\item[Semiorders.] This class of partial orders was introduced to model human preferences~\cite{Luc-Em-56} in which each element can be associated with a numerical value, pairs of elements whose values are within a fixed margin of error are incomparable, and farther-apart pairs are ordered by their numerical values. They are equivalently the interval orders with intervals of unit length, or the proper interval orders (interval orders in which no interval contains another).
For such orderings, we give in \autoref{sec:semiorder} a bound of $O(n\log n)$ on the complexity of $\polygon(P)$ and we show that it can be constructed in time $O(n\log^2 n)$ using an algorithm based on the quadtree data structure.

\item[Series-parallel partial orders.] These are orders formed recursively from smaller orders of the same type by two operations: series compositions (in which all elements from one order are placed earlier in the combined ordering than all elements of the other order) and parallel compositions (in which pairs of one element from each ordering are incomparable). These orderings have been applied for instance in scheduling applications by Lawler~\cite{Law-ADM-78}. For such orderings, the sets of the form $\polygon(P)$ have a corresponding recursive construction by two operations: the convex hull of the union of two convex polygons, and the Minkowski sum of two convex polygons. It follows that $\polygon(P)$ has complexity $O(n)$. This construction does not immediately lead to a fast construction algorithm, but in \autoref{sec:series-parallel} we adapt a splay tree data structure~\cite{SleTar-JACM-85} to construct $\polygon(P)$ in time $O(n\log n)$. Our previous results for optimal subtrees~\cite{CarEpp-SWAT-06}
follow as a special case of this result.

\item[Bounded treewidth.] Suppose that partial order $P$ has $n$ elements and its transitive reduction (the \emph{covering graph} of $P$) forms a directed acyclic graph whose undirected version has treewidth~$t$. (For prior work on treewidth of partial orders, see~\cite{JorMicMil-13}; for prior work on parametric optimization on graphs of bounded treewidth, see~\cite{FerSlu-Algs-97}.) Then we show in \autoref{sec:treewidth} that $\polygon(P)$ has polynomially many vertices, with exponent $t+2$, and that it can be constructed in polynomial time.

\item[Incidence posets.] The incidence poset of a graph $G$ has the vertices and edges as elements, with an order relation $x\le y$ whenever $x$ is an endpoint of $y$. One of the initial applications for the closure problem concerned the design of freight delivery systems in which a certain profit could be expected from each of a set of point-to-point routes in the system, but at the cost of setting up depots at each endpoint of the routes~\cite{Bal-MS-70,Rhy-MS-70}; this can be modeled with an incidence poset for a graph with a vertex at each depot location and an edge for each potential route. Since the profits and costs have different timeframes, it is reasonable to combine them in a nonlinear way, giving a bicriterion closure problem. The transitive reduction of an incidence poset is a subdivision of~$G$ with the same treewidth, so our technique for partial orders of bounded treewidth also applies to incidence posets of graphs of bounded treewidth.

\item[Fences, generalized fences, and polytrees.] The fences or zigzag posets are partial orders whose transitive reduction is a path with alternating edge orientations. A \emph{generalized fence} may be either an oriented path (an up-down poset)~\cite{Mun-SJDM-06} or an oriented tree (polytree)~\cite{Rus-Ord-89}. Polynomial bounds on the complexity and construction time for $\polygon(P)$ for all of these classes of partial orders follow from the result for treewidth~$t=1$. However, in these cases we simplify the bounded-treewidth construction and provide tighter bounds on these quantities (\autoref{thm:generalized-fence} and \autoref{thm:polytree}).

\item[Bounded width.] The \emph{width} of a partial order is the maximum number of elements in an \emph{antichain}, a set of mutually-incomparable elements. Low-width partial orders arise, for instance, in the edit histories of version control repositories~\cite{BanDevEpp-ANALCO-14}.
The treewidth of a partial order is less than twice its width,\footnote{To obtain a path-decomposition of width $2w-1$ from a partial order of width $w$, consider any linear extension of the partial order,
partition the partial order into a downset and an upset at each position of the linear extension, and form the sequence of sets that are the union of the maxima of a downset and the minima of the complementary upset.} but partial orders of width~$w$ have $O(n^w)$ downsets, tighter than the bound that would be obtained by using treewidth. We use quadtrees to show more strongly in \autoref{sec:width} that in this case $\polygon(P)$ has $O(n^{w-1}+n\log n)$ vertices and can be constructed in time within a logarithmic factor of this bound.
\end{description}

We have been unable to obtain an example of a family of partial orders with a nonlinear lower bound on the complexity of $\polygon(P)$, nor have we been able to obtain a nontrivial upper bound on the hull complexity for unrestricted partial orders. Additionally, we have been unable to obtain polynomial bounds on the hull complexity of the above types of partial orders for more than one parameter (i.e., for weights of dimension higher than two).  We also do not know of any computational complexity bounds (such as NP-hardness) for the parametric closure problem for any class of partial orders in any finite dimension. We leave these problems open for future research.

\section{Preliminaries: Minkowski sums and hulls of unions}

Our results on the complexity of the convex polygons $\polygon(P)$ associated with a partial order hinge on decomposing these polygons recursively into combinations of simpler polygons. To do this, we use two natural geometric operations that combine pairs of convex polygons to produce more complex convex polygons.

\begin{figure}[t]
\centering\includegraphics[width=0.75\textwidth]{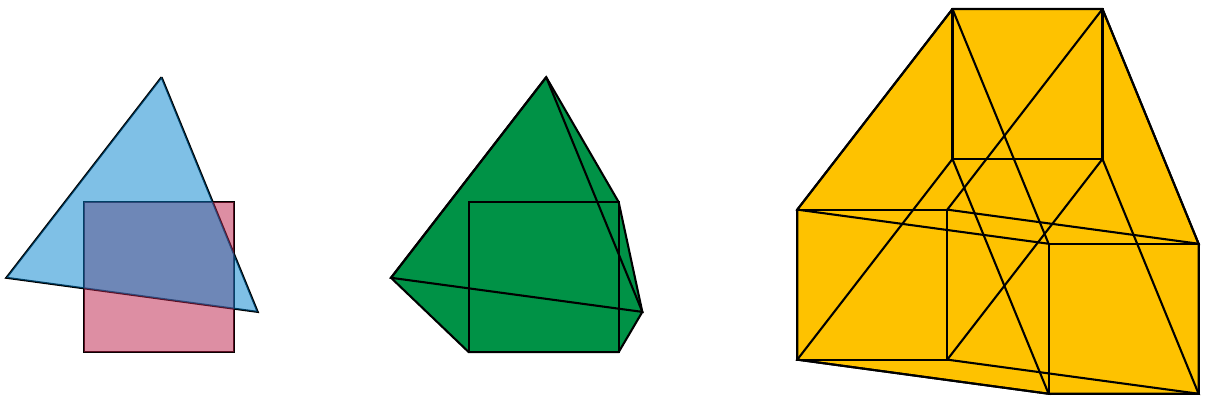}
\caption{Two convex polygons $P$ and $Q$ (left), the convex hull of their union $P\Cup Q$ (center),
and their Minkowski sum $P\oplus Q$ (right).}
\label{fig:polyops}
\end{figure}

\begin{definition}
We define a \emph{convex polygon} to be the convex hull of a nonempty finite set of points in the Euclidean plane. A \emph{vertex} of the polygon is a point that can be obtained as the intersection of the polygon with a closed halfplane, and an \emph{edge} of the polygon is a line segment that can be obtained as the intersection of the polygon with a closed halfplane. This definition requires us to include two degenerate special cases: we consider a single point to be a degenerate convex polygon with one vertex and no edges, and we consider a line segment to be a degenerate convex polygon with two vertices and one edge.
\end{definition}

\begin{definition}
For any two convex polygons $P$ and $Q$, let $P\oplus Q$ denote the Minkowski sum of $P$ and $Q$ (the set of points that are the vector sum of a point in $P$ and a point in $Q$), and let $P\Cup Q$ denote the convex hull of the union of $P$ and $Q$ (\autoref{fig:polyops}).
\end{definition}

\begin{lemma}[folklore]
\label{lem:combine-polygons}
If convex polygons $P$ and $Q$ have $p$ and $q$ vertices respectively, then $P\oplus Q$ and $P\Cup Q$ have at most $p+q$ vertices, and can be constructed from $P$ and $Q$ in time $O(p+q)$.
\end{lemma}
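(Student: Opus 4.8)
The plan is to treat the two operations separately: in each case the vertex bound comes from a short combinatorial observation, and the $O(p+q)$ running time from a single linear-time merge of presorted cyclic lists. The vertex bounds are the easy direction. For $P\Cup Q$: every vertex of the hull of the union is an extreme point of $P\cup Q$ and hence an extreme point, so a vertex, of $P$ or of $Q$; there are at most $p+q$ of those. For $P\oplus Q$: for each direction $d$ the face of $P\oplus Q$ extreme in direction $d$ is the vector sum of the $d$-extreme faces of $P$ and of $Q$, so it is an edge only if at least one of those is an edge; thus every edge of $P\oplus Q$ is parallel to an edge of $P$ or of $Q$, and the number of edges, equivalently vertices, of $P\oplus Q$ is at most the number of distinct edge directions occurring in $P$ or in $Q$, which is at most $p+q$.

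For the construction of $P\oplus Q$ I would represent each convex polygon by the cyclic list of its edge vectors, in order of the angle each makes with the positive $x$-axis; convexity forces exactly this order, so the list is already sorted. Starting from the vertex of $P\oplus Q$ equal to the sum of the lowest vertices of $P$ and of $Q$ and walking counterclockwise, the boundary of $P\oplus Q$ traverses precisely the merged multiset of these edge vectors in angular order, two parallel edges contributing their vector sum. A single linear merge of the two sorted lists, accumulating running partial sums to recover the vertex coordinates, therefore produces $P\oplus Q$ in time $O(p+q)$.

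For $P\Cup Q$ I would first handle the trivial case in which neither polygon is two-dimensional (a computation on at most four points), and otherwise pick a point $o$ in the interior of a two-dimensional one of them, say $P$. The vertices of $P$ are already sorted by polar angle around $o$, and the vertices of $Q$ seen from $o$ break into at most two polar-angle-monotone runs (separated by the two tangent lines from $o$ to $Q$), so a merge of these at most three presorted runs produces, in time $O(p+q)$, a single cyclic list of all vertices of $P$ and $Q$ sorted by angle about $o$. Since $o$ lies in the interior of $\hull(P\cup Q)$, one Graham-scan sweep over this list — the sort already done — outputs the hull in linear time; vertices interior to the union (for instance when one polygon contains the other) are simply popped off the stack. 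An alternative is to locate the outer common tangent segments between $\partial P$ and $\partial Q$ by the standard synchronized two-pointer walk and copy the surviving boundary arcs.

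The one point needing care is the hull of the union when $P$ and $Q$ overlap: the boundary of $P\Cup Q$ need not split into just two arcs joined by two bridge edges — two overlapping triangles forming a hexagram already produce six alternating arcs — so the clean ``two bridges'' picture fails in general, and one should rely on the angular-merge argument above (or initialize the bridge walk carefully and allow arcs to degenerate to single vertices). Everything else is a textbook linear-time merge, and none of the arguments requires the polygons to be in general position.
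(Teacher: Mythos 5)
Your proposal is correct and follows essentially the same route as the paper: the vertex bounds come from the same observations (every vertex of $P\Cup Q$ is a vertex of $P$ or of $Q$; every edge direction of $P\oplus Q$ is an edge direction of $P$ or of $Q$, and vertices and edge directions are equinumerous), and both constructions are a single linear-time merge of presorted cyclic lists followed by a Graham scan. The only cosmetic difference is that for $P\Cup Q$ the paper merges the upper hulls and the lower hulls separately in $x$-order rather than sweeping angularly around an interior point; your caution that the two-bridge/common-tangent shortcut fails for overlapping polygons is valid, and the paper's $x$-order merge sidesteps that pitfall just as your angular merge does.
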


\begin{proof}
The complexity bound on $P\oplus Q$ follows from the fact that the set of edge orientations of $P\oplus Q$ is the union of the sets of edge orientations of $P$ and of $Q$. Therefore, $P\oplus Q$ has at most as many edges as the sum of the numbers of edges in $P$ and $Q$. The result follows from the fact that in any convex polygon, the numbers of vertices and edges are equal (except for the degenerate special cases, with one more vertex than edges). To compute $P\oplus Q$ from $P$ and $Q$, we may merge the two cyclically-ordered lists of edge orientations of the two polygons in linear time, and then use the sorted list to trace out the boundary of the combined polygon.

Similarly, the complexity bound on $P\Cup Q$ follows from the fact that its vertex set is a subset of the union of the vertices in $P$ and the vertices in $Q$. Therefore, its number of vertices is at most the sum of the numbers of vertices of $P$ and $Q$. To compute $P\Cup Q$ from $P$ and $Q$ in linear time, it is convenient to partition each convex hull into its \emph{lower hull} and \emph{upper hull}, two monotone polygonal chains, by splitting it at the leftmost and rightmost vertex of the hull.
We may sort the vertices of the upper hulls by doing a single merge in linear time, apply Graham scan to the sorted list, and do the same for the lower hulls.
\end{proof}

\begin{corollary}
\label{cor:formula-complexity}
Suppose that $P$ is a convex polygon, described as a formula that combines a set of $n$ points in the plane into a single polygon using a sequence of $\oplus$ and $\Cup$ operations. Suppose in addition that, when written as an expression tree, this formula has height~$h$. Then $P$ has at most $n$ vertices and it may be constructed from the formula in time $O(nh)$.
\end{corollary}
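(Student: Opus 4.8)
The plan is to prove both assertions by structural induction on the expression tree, invoking Lemma~\ref{lem:combine-polygons} at every internal node. For each node $v$ let $n_v$ denote the number of input points occurring at leaves of the subtree rooted at $v$, so that $n_v = n$ at the root. First I would establish, by induction, that the polygon computed at $v$ has at most $n_v$ vertices. In the base case $v$ is a leaf, the polygon is a single point with one vertex, and $n_v = 1$. For an internal node $v$ combining the polygons at its children $u$ and $w$ (by either $\oplus$ or $\Cup$), Lemma~\ref{lem:combine-polygons} says the result has at most (vertices at $u$) $+$ (vertices at $w$) $\le n_u + n_w = n_v$ vertices. Evaluating at the root gives the stated bound of $n$ vertices.

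For the running time, note that by the vertex bound just proved the two polygons combined at an internal node $v$ have at most $n_u$ and $n_w$ vertices, so by Lemma~\ref{lem:combine-polygons} the combination step at $v$ runs in time $O(n_u + n_w) = O(n_v)$. The total time is thus $O\bigl(\sum_v n_v\bigr)$, the sum taken over all nodes of the tree. I would bound this sum by a charging argument: an input point contributes $1$ to $n_v$ exactly for those nodes $v$ on the root-to-leaf path of its leaf, and there are at most $h+1 = O(h)$ such nodes because the tree has height $h$. Hence $\sum_v n_v = O(nh)$, yielding the claimed $O(nh)$ construction time. (If a single node is permitted to combine more than two polygons, the same estimate holds after replacing that node by a bottom-heavy binary subtree, or simply by summing the pairwise combination costs.)

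The proof is essentially an accounting exercise and has no genuinely hard step; the only point needing care is the transition from the local cost $O(p+q)$ supplied by Lemma~\ref{lem:combine-polygons} to the global $O(nh)$ bound. This requires that the vertex bound be proved first, so that the quantity $p+q$ at node $v$ is controlled by $n_v$ rather than by something that might grow with the total size of the formula; once that is in place, the height-$h$ charging argument closes the estimate.
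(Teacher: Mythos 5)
Your proof is correct and is exactly the argument the paper intends; the paper states this as an immediate corollary of Lemma~\ref{lem:combine-polygons} without writing out the induction, and your structural induction with the per-leaf charging over the height-$h$ root-to-leaf paths is the standard way to make it explicit.
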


More complex data structures can reduce this time to $O(n\log n)$; see \autoref{sec:series-parallel}.

In higher dimensions, the convex hull of $n$ points and Minkowski sum of $n$ line segments both have polynomial complexity with an exponent that depends linearly on the dimension. However, we do not know of an analogous bound on the complexity of convex sets formed by mixing Minkowski sum and hull-of-union operations. If such a bound held, we could extend our results on parametric closures to the corresponding higher dimensional problems.

\section{Semiorders}
\label{sec:semiorder}

A \emph{semiorder} is a type of partial order defined by Luce~\cite{Luc-Em-56} to model human preferences. Each element of the order can be associated with a numerical value, which in the application to preference modeling is the \emph{utility} of that element to the person whose preferences are being modeled. For pairs of items whose utilities are sufficiently far from each other, the ordering of the two items in the semiorder is the same as the numerical ordering of their utilities. However, items whose utilities are within some (global) margin of error of each other are incomparable in the semiorder. More formally:

\begin{definition}
Let a collection of items $x_i$ be given, with numerical utilities $u_i$, together with a (global) threshold $\theta>0$. Then this information determines a partial order in which $x_i\le x_j$ whenever $u_i\le u_j-\theta$. The partial orders that can be obtained in this way are called the \emph{semiorders}. We will call $\theta$ the \emph{margin of error} of the semiorder.
\end{definition}

(We note that some authors use irreflexive binary relations, instead of partial orders, to define semiorders; this distinction will not be important for us.)

Similar concepts of comparisons of numerical values with margins of error give rise to semiorders in many other areas of science and statistics~\cite{PirVin-97}.
For efficient computations on semiorders we will assume that the utility values of each element are part of the input to an algorithm, and that the margin of error has been normalized to one. For instance, the semiorder $N$ of \autoref{fig:N} can be represented as a semiorder with utilities $2/3$, $0$, $2$, and $4/3$ for $a$, $b$, $c$, and $d$ respectively. With this information in hand, the comparison between any two elements can be determined in constant time. If only the ordering itself is given, numerical utility values can be constructed from it in time $O(n^2)$~\cite{Ave-Algs-92}.

The concept of a downset is particularly natural for a semiorder: it is a set of elements whose utility values could lie below a sharp numerical threshold, after perturbing each utility value by at most half the margin of error.
In this way, the closure problem (the problem of finding a maximum weight downset) can alternatively be interpreted as the problem of finding the maximum possible discrepancy of a one-dimensional weighted point set in which the location of each point is known imprecisely.
Thus, this problem is related to several other recent works on geometric computations with imprecise points (see, e.g.,~\cite{LofKre-CG-10,LofKre-Algo-10}).
Semiorders may have exponentially many downsets; for instance, if all items have utilities that are within one unit of each other, all sets are downsets. Nevertheless, as we show in this section, if $S$ is a semiorder, then the complexity of $\polygon(S)$ (the number of downsets that are optimal for some parameter value) is $O(n\log n)$.

\subsection{Mapping downsets to a grid}

If $S$ is any parametrically weighted semiorder, we may write the sorted order of the utility values of elements of $S$ as $u_0,u_1,\dots,u_{n-1}$ where $n=|S|$, and we may write the elements themselves (in the same order) as $x_0,x_1,\dots,x_{n-1}$. By scaling the utility values we may assume without loss of generality that the threshold $\theta$ used to define a semiorder from these values is set as $\theta=1$.
By padding $S$ with items that have a fixed zero weight and a utility that is smaller than that of the elements by more than the margin of error, we may additionally assume without loss of generality that $n$ is a power of two without changing the values of the parametric closure problem on $S$.

It is convenient to parameterize downsets by pairs of integers, as follows.

\begin{definition}
Let $L$ be an arbitrary downset in $\lowersets(S)$. Let $j$ be the largest index of an element $x_j$ of $L$. Let $i$ be the smallest index of an element $x_i$ such that $x_i$ does not belong to $L$ and $i<j$, or~$-1$ if no such element exists.
Define $\extremes(L)$ to be the pair of integers $(i+1,j)$.
\end{definition}

Thus, $\extremes$ maps the family $\lowersets(S)$ to the integer grid $[0,n-1]^2$. The mapping is many-to-one: potentially, many downsets may be mapped to each grid point. However, not every grid point is in the image of $\lowersets(S)$. In particular, a point $(i,j)$ with $i> j$ cannot be the image of a downset, because the element defining the first coordinate of $\extremes$ must have an index smaller than the element defining the second coordinate. Additionally, when $i>0$, a point $(i,j)$ with $u_{i-1}<u_j-1$ (i.e. with utility values that are beyond the margin of error for the semiorder) also cannot be the image of a downset, because in this case $x_{i-1}\le x_j$ in the semiorder, so every downset that includes $x_j$ also includes $x_{i-1}$. Thus, the image of $\extremes$ lies in an orthogonally convex subset of the grid, bounded below by its main diagonal and above by a monotone curve (\autoref{fig:semiorder-qt}).

\begin{figure}[t]
\centering
\includegraphics[width=0.9\textwidth]{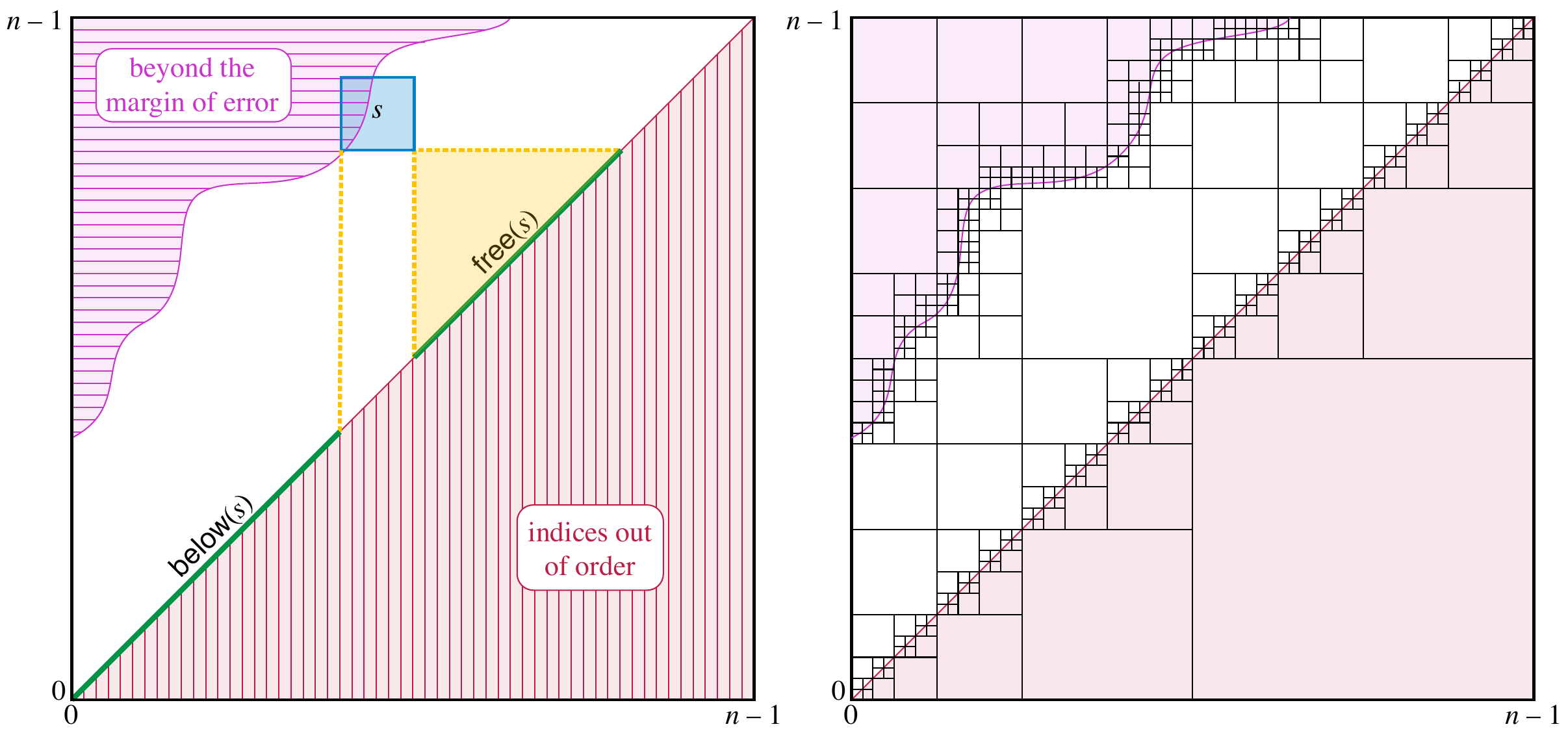}
\caption{The grid $[0,n-1]^2$, with the two regions that cannot be part of the image of $\extremes$. The left image shows a square subproblem $s$ and $\freeset(s)$; the right image shows the quadtree decomposition of the grid used to prove \autoref{thm:semiorder}.}
\label{fig:semiorder-qt}
\end{figure}

\begin{definition}
Let $s$ be any square subset of the integer grid $[0,n-1]^2$, and define $\subproblem(s)$ to be the partially-ordered subset of the semiorder $S$ consisting of the elements whose indices are among the rows and columns of $s$. Define $\freeset(s)$ to be the (unordered) set of elements of $S$ that do not belong to $\subproblem(s)$, but whose indices are between pairs of indices that belong to $\subproblem(s)$. Define $\below(s)$ to be the set of elements of $S$ whose indices are smaller than all elements of $\subproblem(s)$. (See \autoref{fig:semiorder-qt}, left, for an example.)
\end{definition}

\subsection{Decomposition of grid squares}
These definitions allow us to decompose the downsets that are mapped by $\extremes$ to the given square~$s$.

\begin{lemma}
\label{lem:characterize-semiorder-extremes}
Given a square $s$, suppose that the subfamily $\mathcal F$ of $\lowersets(S)$ that is mapped by $\extremes$ to $s$ is nonempty. Then each set in $\mathcal F$ is the disjoint union of three sets: a nonempty downset of $\subproblem(s)$, the set $\below(s)$, and an arbitrary subset of $\freeset(s)$.
\end{lemma}

\begin{proof}
Any downset of $S$ must remain a downset in any subset of $S$, and in particular its intersection with $\subproblem(s)$ must also be a downset. Additionally, it is not possible for a set in $\mathcal F$ to omit any member of $\below(s)$, nor to include any element outside $\subproblem(s)\cup\below(s)\cup\freeset(s)$, for then it would not be mapped into $s$ by $\extremes$. Similarly, the condition on the row index of the largest element of $\subproblem(s)$ must be met, or again $\extremes$ would map the given set outside of~$s$. Therefore, every set in $\mathcal F$ has the form described.
\end{proof}

Conversely, let a set be formed as the disjoint union of a nonempty downset of $\subproblem(s)$, $\below(s)$, and an arbitrary subset of $\freeset(s)$. Then this set is necessarily a downset, although it might not be mapped by $\extremes$ to $s$ (depending on the choice of the downset of $\subproblem(s)$ in the disjoint union).

We can decompose the convex hull of the projections of these downsets into a contribution from the subproblem of $s$ and another contribution from the free set of~$s$. The contribution from the free set has a simple structure based on Minkowski sums:

\begin{lemma}
\label{lem:powerset}
For any square $s$, let $\powerset(\freeset(s))$ be the family of all subsets of $\freeset(s)$.
Let weight function $w:S\to\mathbb{R}^2$ define a projection $\project$ from families of sets to point sets in $\mathbb{R}^2$.
Then $\project(\powerset(\freeset(s)))$ is the Minkowski sum of the sets $\{(0,0),w(x_i)\}$ for $x_i\in\freeset(s)$. Its convex hull is a centrally symmetric convex polygon  $\hull(\project(\powerset(\freeset(s))))$ (the Minkowski sum of the corresponding line segments) with at most $k=2|\freeset(s)|$ sides (fewer if some of these line segments are collinear), and can be constructed in time $O(k\log k)$.
\end{lemma}

\begin{proof}
The fact that $\project(\powerset(\freeset(s)))$ is a Minkowski sum of line segments can be proved by induction on the number of members of $\freeset(s)$: for any particular element $x$, the subfamilies of downsets that include or exclude $x$ project to translates of the same Minkowski sum, by the induction hypothesis, and the convex hull of the union of these two translates is the same thing as the Minkowski sum with one more line segment.
The bounds on the number of sides and construction time follow immediately from \autoref{cor:formula-complexity}, using the fact that the Minkowski sum of these segments can be expressed as a tree of binary Minkowski sums of logarithmic height.
\end{proof}

The $O(k\log k)$ time bound in \autoref{lem:powerset} can be reduced. The algorithm described, using a balanced tree of binary Minkowski sums, can be interpreted as performing a merge sort on the slopes of the line segments whose Minkowski sum is the desired hull by their slopes. If these segments were already sorted, the Minkowski sum could be constructed in linear time. And, in this case, it is possible to preprocess the input so that, for any contiguous subsequence of the elements, we can sort the corresponding line segments by their slopes quickly using integer sorting algorithms. However, we omit this speedup, as it adds complication to our overall algorithm without improving its total time.

The same proof used for \autoref{lem:powerset} also allows us to quickly compute $\polygon(\subproblem(s))$ for any square $s$ that is entirely within the margin of error, using the fact that the family of downsets of this subproblem is just its power set.

\begin{lemma}
\label{lem:unsubdivided-squares}
Let $s$ be a square of the grid defined by a given semiorder that is entirely within the margin of error of the semiorder, of side length $\ell$. Then $\polygon(\subproblem(s))$  is a polygon with $O(\ell)$ sides and can be computed in time $O(\ell\log\ell)$.
\end{lemma}

Computing $\polygon(\subproblem(s))$ for the remaining squares is trickier, but may be performed by decomposing the polygon into polygons of the same type for four smaller squares.

\begin{lemma}
\label{lem:semiorder-quadtree}
Let $s$ be a square in the grid $[0,n-1]^2$, containing an even number of grid points, and subdivide $s$ into four congruent smaller squares $s_i$ $(0\le i<4)$. Let $\polygon(\subproblem(s))$ have $c$ vertices, define $c_i$ in the same way for each $s_i$, and let $\ell$ be the side length of $s$. Then $c\le \sum c_i + O(\ell)$, and $\polygon(\subproblem(s))$ can be constructed from the corresponding hulls for the smaller squares in time $O(\sum c_i+\ell\log\ell)$.
\end{lemma}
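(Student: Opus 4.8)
The plan is to exhibit $\polygon(\subproblem(s))$ as a short formula, in the operations $\oplus$ and $\Cup$, whose inputs are the four polygons $\polygon(\subproblem(s_i))$ together with a bounded number of auxiliary convex polygons of size $O(\ell)$; the claimed bounds then follow from \autoref{lem:combine-polygons} and \autoref{obs:powerset}. For a square $t$ in the grid, let $\mathcal F_t$ denote the subfamily of $\lowersets(S)$ that $\extremes$ maps into $t$, let $I(t)$ denote the set of grid indices occurring as a row or column of $t$ (so $\subproblem(t)$ is the sub-semiorder of $S$ on $\{x_k : k\in I(t)\}$ and $|I(t)|\le 2\ell$), and write $F_t=\hull(\project(\powerset(\freeset(t))))$. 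I would assume $\mathcal F_s\neq\emptyset$, since the quadtree recursion used to prove \autoref{thm:semiorder} never descends into a square with $\mathcal F_s=\emptyset$, and any subsquare $s_i$ with $\mathcal F_{s_i}=\emptyset$ will simply be omitted from the formula below. Because the four subsquares partition $s$, the family $\mathcal F_s$ is the disjoint union of the families $\mathcal F_{s_i}$, so $\project(\mathcal F_s)=\bigcup_i\project(\mathcal F_{s_i})$ and hence $\hull(\project(\mathcal F_s))$ is the $\Cup$ of the polygons $\hull(\project(\mathcal F_{s_i}))$. Applying \autoref{cor:subprob+free} to $s$ and to each $s_i$ turns this into
\[
\polygon(\subproblem(s))\oplus F_s \;=\; \bigl(\polygon(\subproblem(s_0))\oplus F_{s_0}\bigr)\Cup\cdots\Cup\bigl(\polygon(\subproblem(s_3))\oplus F_{s_3}\bigr),
\]
where terms with $\mathcal F_{s_i}=\emptyset$ are deleted.

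The heart of the argument is the claim that $\freeset(s)\subseteq\freeset(s_i)$ for every $i$, and more precisely $\freeset(s_i)\setminus\freeset(s)\subseteq I(s)$. The rows of $s_i$ form a subinterval of the rows of $s$, and likewise for columns, so the least and greatest indices of $I(s_i)$ lie between the least and greatest indices of $I(s)$. The only nontrivial case is when $I(s)$ consists of two disjoint index-intervals with a nonempty gap between them (otherwise $\freeset(s)=\emptyset$); in that case $\freeset(s)$ is exactly that gap, and whichever of the rows or columns of $s$ comes first index-wise, comparing endpoints shows that the gap of $s_i$ contains the gap of $s$. Moreover, any index in $\freeset(s_i)\setminus\freeset(s)$ lies in the index-range spanned by $I(s)$ yet is absent from $\freeset(s)$, so it must belong to $I(s)$; hence $|\freeset(s_i)\setminus\freeset(s)|\le|I(s)|\le 2\ell$. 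By \autoref{obs:powerset}, $F_t$ is the Minkowski sum of the segments $\{(0,0),w(x_k)\}$ over $k\in\freeset(t)$, so splitting this product along $\freeset(s_i)=\freeset(s)\sqcup(\freeset(s_i)\setminus\freeset(s))$ yields the exact factorization $F_{s_i}=F_s\oplus G_i$, where $G_i=\hull(\project(\powerset(\freeset(s_i)\setminus\freeset(s))))$ is a centrally symmetric polygon with $O(\ell)$ vertices, constructible in $O(\ell\log\ell)$ time.

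Substituting $F_{s_i}=F_s\oplus G_i$ into the displayed identity, using that Minkowski sum by the fixed convex polygon $F_s$ distributes over $\Cup$, and then cancelling the common summand $F_s$ (Minkowski summands can be cancelled among nonempty compact convex sets, e.g.\ by comparing support functions), I would conclude
\[
\polygon(\subproblem(s))\;=\;\bigl(\polygon(\subproblem(s_0))\oplus G_0\bigr)\Cup\cdots\Cup\bigl(\polygon(\subproblem(s_3))\oplus G_3\bigr),
\]
again deleting any term with $\mathcal F_{s_i}=\emptyset$. This is a formula combining the four polygons $\polygon(\subproblem(s_i))$, of total vertex count $\sum_i c_i$, with the four $O(\ell)$-vertex polygons $G_i$, using at most a constant number of $\oplus$ and $\Cup$ operations, so \autoref{lem:combine-polygons} gives $c\le\sum_i c_i+O(\ell)$ and bounds the combining work by $O(\sum_i c_i+\ell)$; building the $G_i$ costs $O(\ell\log\ell)$ by \autoref{obs:powerset}, for $O(\sum_i c_i+\ell\log\ell)$ in total. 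I expect the only delicate step to be the free-set containment $\freeset(s)\subseteq\freeset(s_i)$ and the resulting exact factorization $F_{s_i}=F_s\oplus G_i$: this is exactly what lets $F_s$ be cancelled, converting the hull-of-union identity over the restricted families $\mathcal F_{s_i}$ into a clean recursion for the polygons $\polygon(\subproblem(s_i))$ themselves.
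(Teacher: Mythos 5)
Your proof is correct and follows essentially the same route as the paper's: split the family of lower sets mapped into $s$ according to which quadrant $s_i$ receives them, apply Corollary~\ref{cor:subprob+free} to each piece to express it as $\polygon(\subproblem(s_i))$ plus an $O(\ell)$-size free-set correction, and combine everything with a constant number of $\Cup$ and $\oplus$ operations using Lemma~\ref{lem:combine-polygons} and Observation~\ref{obs:powerset}. The only difference is bookkeeping: the paper applies the corollary with $\subproblem(s)$ as the ambient order, so the relative free set $\freeset(s_i)\setminus\freeset(s)$ appears directly and no cancellation is needed, whereas you work in the full semiorder and cancel the common Minkowski summand $F_s$; your explicit containments $\freeset(s)\subseteq\freeset(s_i)$ and $\freeset(s_i)\setminus\freeset(s)\subseteq I(s)$ are precisely the $O(\ell)$ bound the paper invokes (and states with the set difference apparently reversed).
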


\begin{proof}
For each smaller square $s_i$, define $H_i$ to be the polygon
\[
\polygon(\subproblem(s_i))\oplus \project(\powerset(\freeset(s_i)\cap\subproblem(s)))
\]
translated by adding all of the weights of elements in $\below(s_i)\cap\subproblem(s)$.

Then each vertex of this polygon $H_i$ represents a downset of $\subproblem(s)$.
By \autoref{lem:characterize-semiorder-extremes} (viewing $s_i$ as a subproblem of $\subproblem(s)$)  every downset of $\subproblem(s)$
that is mapped by $\extremes$ into $s_i$ is included in $H_i$.
(This polygon may also include some other downsets mapped into $s$ but not $s_i$, but this is not problematic.) Therefore, we can construct $\polygon(\subproblem(s))$ itself as
\[
\polygon(\subproblem(s))=H_0\Cup H_1\Cup H_2\Cup H_3.
\]
Thus, we have found a formula that expresses $\polygon(\subproblem(s)$ using a constant number of Minkowski sums and unions of hulls of the corresponding hulls for smaller squares,
together with free subproblems of total size $O(\ell)$. The result follows by using \autoref{lem:powerset} to construct the polygons for the free subproblems and \autoref{cor:formula-complexity} to combine the resulting polygons into a single polygon.
\end{proof}

\subsection{Semiorder algorithm}

Putting it all together, we can use the observations above to decompose $\polygon(S)$ into a combination of similarly-computed polygons for smaller grid squares, recursively decomposed into even smaller squares. This leads to an algorithm that performs this decomposition and then uses it in a bottom-up order to construct the polygons of larger subproblems from smaller ones, eventually producing the solution to the whole problem.

\begin{theorem}
\label{thm:semiorder}
If $S$ is a semiorder with $n$ elements $x_i$, specified with their utility values $u_i$ and a system of two-dimensional weights $w(x_i)$,
then $\polygon(S)$ has complexity $O(n\log n)$ and can be constructed in time $O(n\log^2 n)$.
\end{theorem}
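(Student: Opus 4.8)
The plan is to unroll Lemma~\ref{lem:semiorder-quadtree} over the ordinary quadtree decomposition of the grid $[0,n-1]^2$. Since we have arranged that $n$ is a power of two, this quadtree is clean: its root is the whole grid, so $\subproblem$ of the root is all of $S$ and the associated polygon is exactly $\polygon(S)$, while its leaves are the $n$ individual cells, for which $\subproblem$ has at most two elements and the polygon is trivially of $O(1)$ size. A recursion that descends into all four children of every node is doomed, however: a generic square $s$ of side $\ell$ has $\Theta(\ell^2)$ lower sets mapped into it by $\extremes$, so $\polygon(\subproblem(s))$ can have $\Theta(\ell^2)$ vertices and the totals come out quadratic. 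The whole point is to prune the recursion using the remark in the discussion preceding \autoref{fig:semiorder-qt} that the image of $\extremes$ is an orthogonally convex region bounded below by the main diagonal and above by a single monotone staircase curve.

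Accordingly, I would split the quadtree squares into three kinds. If $s$ lies wholly outside the feasible region, then no lower set is mapped into it; its contribution is the empty polygon, which is the identity for $\Cup$, so we prune. If $s$ lies wholly inside the feasible region, then $s$ is entirely above the diagonal and entirely below the curve, and evaluating the curve's defining inequality $u_{i-1}\ge u_j-1$ at the extreme corner of $s$ together with monotonicity of the utilities shows that every element whose index lies in the index interval spanned by $s$ has pairwise utility gap at most one; hence $\subproblem(s)$ is an antichain, every subset of it is a lower set, and $\polygon(\subproblem(s))=\hull(\project(\powerset(\subproblem(s))))$ is a Minkowski sum of $O(\ell)$ segments that Observation~\ref{obs:powerset} builds directly in $O(\ell\log\ell)$ time with no recursion. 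Only squares that straddle the boundary are subdivided, and their polygons are assembled from the four children's polygons by Lemma~\ref{lem:semiorder-quadtree}.

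It then remains to count. Because the feasible region's boundary consists of one straight line (the diagonal) and one monotone curve, among the $4^j$ quadtree squares of side $n/2^j$ only $O(2^j)$ can straddle the boundary; consequently the recursion processes only $O(2^j)$ squares at that level (the straddling ones, plus their at most four children that are maximal feasible squares, plus the $n$ single cells at the bottom). Unrolling Lemma~\ref{lem:semiorder-quadtree} down to the leaves, the vertex count of $\polygon(S)$ is at most the sum of the leaf polygon sizes plus an $O(\ell)$ term for each straddling square; each level contributes $O(2^j)\cdot O(n/2^j)=O(n)$ to each of these two sums, so the total is $O(n\log n)$. For the running time, a straddling square of side $\ell$ costs $O\bigl(\sum_i c_i+\ell\log\ell\bigr)$; the $\ell\log\ell$ overheads sum to $O(n\log^2 n)$, and $\sum_i c_i$ summed over all straddling squares equals the sum of the sizes of all processed polygons, which is $O\bigl((n/2^j)\log(n/2^j)\bigr)$ per square with $O(2^j)$ squares per level, again $O(n\log^2 n)$ in total. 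Sorting the utilities and building the quadtree costs $O(n\log n)$ and is dominated.

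The step I expect to be the main obstacle is making the pruning rigorous: one must check carefully that a square wholly inside the feasible region forces $\subproblem(s)$ to be an antichain (so that its polygon is cheap and the recursion stops there), and that the orthogonal convexity of the image of $\extremes$ really does cap the straddling squares at $O(2^j)$ per level rather than $O(4^j)$. Once those two facts are in place, the rest is bookkeeping on top of Lemma~\ref{lem:semiorder-quadtree} and Observation~\ref{obs:powerset}.
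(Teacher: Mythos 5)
Your proposal is correct and follows essentially the same route as the paper: the same three-way classification of quadtree squares (empty outside the feasible region, an antichain handled by Observation~\ref{obs:powerset} inside it, and subdivision only for squares straddling the two monotone boundary curves), the same $O(n/\ell)$ count of subdivided squares per side length, and the same summation over $O(\log n)$ levels to get $O(n\log n)$ complexity and $O(n\log^2 n)$ time. The two facts you flag as the main obstacles are exactly the ones the paper's proof establishes, in the same way.
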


\begin{proof}
We sort the utility values, pad $n$ to the next larger power of two if necessary and form a quadtree decomposition of the grid $[0,n-1]^2$ (as shown in \autoref{fig:semiorder-qt}, right). For each square $s$ of this quadtree, we associate a convex polygon (or empty set) $\polygon(\subproblem(s))$ computed according to the following cases:
\begin{itemize}
\item If $s$ is a subset of the grid points for which $i>j$, or for which $u_{i-1}<u_j-1$, then no downsets are mapped into $s$ by $\extremes$. We associate square $s$ with the empty set.
\item If $s$ is a subset of the grid points for which $i\le j$ and $u_{i-1}\ge u_j-1$, then every two elements of $\subproblem(s)$ are incomparable. In this case, we associate square $s$ with the polygon $\hull(\project(\powerset(\subproblem(s))))$ computed according to \autoref{lem:unsubdivided-squares}.
\item Otherwise, we split $s$ into four smaller squares. We construct the polygon associated with $s$ by using \autoref{lem:semiorder-quadtree} to combine the polygons associated with its children.
\end{itemize}

It follows by induction that the total complexity of the polygon constructed at any square $s$ of the quadtree is $O(\sum\ell_i)$, and the total time for constructing it is $O(\sum\ell_i\log n)$, where $\ell_i$ is the side length of the $i$th square of the quadtree and the sum ranges over all descendants of~$s$. Here, the $O(\ell_i\log n)$ contribution to the time bound includes not only the terms of the form $O(\ell_i\log\ell_i)$ appearing in \autoref{lem:powerset} and \autoref{lem:unsubdivided-squares}, but also the amount of time spent combining the polygon for this square with other polygons at $O(\log n)$ higher levels of the recursive subdivision.
 
As a base case for the induction, a square containing only a single grid point is associated with a subproblem with one element, with only one downset that maps to that grid point, and a degenerate convex polygon with a single vertex. The polygon constructed at the root of the quadtree is the desired output, and it follows that it has combinatorial complexity and time complexity of the same form, with a sum ranging over all quadtree squares.

The conditions $i>j$ and $u_{i-1}<u_j-1$ define two monotone curves through the grid, and we split a quadtree square only when it is crossed by one of these two curves. It follows that the squares of side length $\ell$ that are subdivided as part of this algorithm themselves form two monotone chains, and that the number of all squares of side length $\ell$ is $O(n/\ell)$. The results of the theorem follow by summing up the contributions to the polygon complexity and time complexity for the $O(\log n)$ different possible values of $\ell$.
\end{proof}

\section{Series-parallel partial orders}
\label{sec:series-parallel}

Series-parallel partial orders were considered in the context of a scheduling problem by Lawler~\cite{Law-ADM-78}, and include as a special case the tree orderings previously studied in our work on bicriterion optimization~\cite{CarEpp-SWAT-06}.

\subsection{Recursive decomposition}

Although it is possible to characterize the series-parallel partial orders as being the orders in which no four elements form the ``N'' of \autoref{fig:N}, it is more convenient for us to define them in terms of a natural recursive decomposition,
which we will take advantage of in our algorithms.

\begin{definition}
The \emph{series-parallel partial orders} are the partial orders that can be constructed from single-element partial orders by repeatedly applying the following two operations:

\begin{description}
\item[Series composition.] Given two series-parallel partial orders $P_1$ and $P_2$, form an order from their disjoint union in which every element of $P_1$ is less than every element of~$P_2$.
\item[Parallel composition.] Given two series-parallel partial orders $P_1$ and $P_2$, form an order from their disjoint union in which there are no order relations between $P_1$ and~$P_2$.
\end{description}
\end{definition}

\begin{figure}[t]
\centering\includegraphics[width=0.65\textwidth]{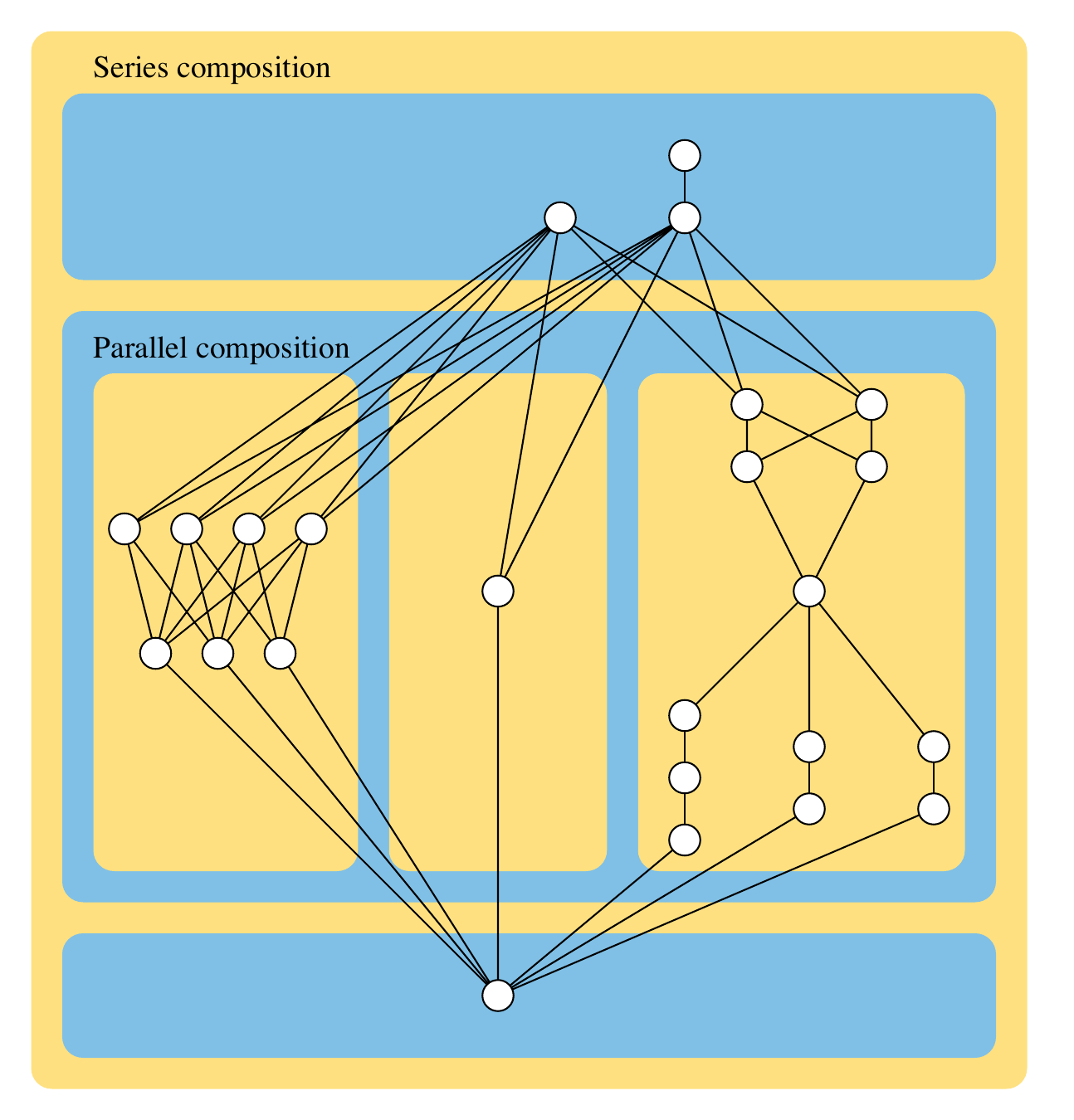}
\caption{A series-parallel partial order. Redrawn from a Wikipedia illustration by the author.}
\label{fig:SeriesParallel}
\end{figure}

See \autoref{fig:SeriesParallel} for an example.
These composition operations correspond naturally to the two geometric operations on convex polygons that we have already been using.

\begin{observation}
If $P$ is the series composition of $P_1$ and $P_2$, then $\polygon(P)$ is the convex hull of the union of $\polygon(P_1)$ and a translate (by the sum of the weights of the elements of $P_1$) of $\polygon(P_2)$. If $P$ is the parallel composition of $P_1$ and $P_2$, then $\polygon(P)$ is the Minkowski sum of $\polygon(P_1)$ and $\polygon(P_2)$.
\end{observation}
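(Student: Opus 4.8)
The plan is to characterize the lower sets of series and parallel compositions explicitly, and then to observe that the projection $\project$ turns disjoint unions of sets into vector sums of their images, so that set-level operations on $\lowersets$ translate into the geometric operations $\oplus$ and $\Cup$ after taking convex hulls.

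\textbf{Parallel composition.} First I would note that, because a parallel composition introduces no order relations between $P_1$ and $P_2$, a subset $L\subseteq P_1\cup P_2$ is a lower set of $P$ if and only if $L\cap P_1$ is a lower set of $P_1$ and $L\cap P_2$ is a lower set of $P_2$. Hence $\lowersets(P)=\{\,L_1\cup L_2 : L_1\in\lowersets(P_1),\ L_2\in\lowersets(P_2)\,\}$, with each such union disjoint. Since $\project(L_1\cup L_2)=\project(L_1)+\project(L_2)$ for disjoint $L_1,L_2$ (the weight of a set is the sum of its elements' weights), the planar point set $\project(\lowersets(P))$ is exactly the Minkowski sum of the point sets $\project(\lowersets(P_1))$ and $\project(\lowersets(P_2))$. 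Taking the convex hull and using the elementary identity that the hull of a Minkowski sum of point sets equals the Minkowski sum of their hulls gives $\polygon(P)=\polygon(P_1)\oplus\polygon(P_2)$.

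\textbf{Series composition.} Next I would analyze the series case. If a lower set $L$ of $P$ contains any element of $P_2$, then since every element of $P_1$ lies below every element of $P_2$, the set $L$ must contain all of $P_1$; conversely any set of the form $P_1\cup L_2$ with $L_2\in\lowersets(P_2)$ is a lower set, as is any lower set of $P_1$ using no element of $P_2$. Therefore $\lowersets(P)=\lowersets(P_1)\cup\{\,P_1\cup L_2 : L_2\in\lowersets(P_2)\,\}$ (the set $P_1$ itself lies in both parts, harmlessly). Writing $W_1=\sum_{x\in P_1}w(x)$, we have $\project(P_1\cup L_2)=W_1+\project(L_2)$, so the second family projects onto the translate $W_1+\project(\lowersets(P_2))$. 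Thus $\project(\lowersets(P))$ is the union of $\project(\lowersets(P_1))$ with that translate, and since convex hull commutes with translation and satisfies $\hull(A\cup B)=\hull(\hull(A)\cup\hull(B))$, we obtain $\polygon(P)=\polygon(P_1)\Cup\bigl(W_1+\polygon(P_2)\bigr)$, which is the claimed hull of the union of $\polygon(P_1)$ with a translate of $\polygon(P_2)$.

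\textbf{Main obstacle.} There is no real obstacle: the statement is essentially bookkeeping once the two structural descriptions of $\lowersets(P)$ are in hand. The only point needing a moment's care is that in the series case $P_1$ appears in both pieces of the union, but this is harmless since duplicating a point does not change a convex hull. Everything else follows from the two facts that $\project$ is additive on disjoint unions and that hull, union, Minkowski sum, and translation interact in the standard way.
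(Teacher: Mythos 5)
Your proof is correct, and it fills in exactly the reasoning the paper leaves implicit (the statement is presented as an unproved Observation): the structural characterizations of $\lowersets(P)$ under series and parallel composition, additivity of $\project$ on disjoint unions, and the standard commutation of convex hull with union and Minkowski sum. The note about $P_1$ appearing in both parts of the union in the series case is a nice touch of care, and harmless as you say.
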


Recursively continuing this decomposition gives us a formula for $\polygon(P)$ in terms of the $\Cup$ and $\oplus$ operations. By \autoref{cor:formula-complexity} we immediately obtain:

\begin{corollary}
\label{cor:sp-complexity}
If $P$ is a series-parallel partial order with $n$ elements, then $\polygon(P)$ has at most $2n$ vertices.
\end{corollary}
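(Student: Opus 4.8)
The plan is to unfold the recursive construction of $P$ all the way down to its single-element building blocks, translate that construction into a formula over convex polygons by means of the preceding Observation, and then apply Lemma~\ref{lem:combine-polygons} (equivalently Corollary~\ref{cor:formula-complexity}) to bound the number of vertices.

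First I would record the base case: if $P=\{x\}$ is a one-element order, then $\lowersets(P)=\{\emptyset,\{x\}\}$, so $\project(\lowersets(P))=\{(0,0),w(x)\}$, and $\polygon(P)$ is the (possibly degenerate) line segment joining these two points, with at most $2$ vertices. Next, by the Observation immediately above, whenever $P$ is the series composition of $P_1$ and $P_2$ we have $\polygon(P)=\polygon(P_1)\Cup T$, where $T$ is a translate of $\polygon(P_2)$; since a translation does not change the number of vertices of a polygon, $T$ has the same vertex count as $\polygon(P_2)$. Likewise, when $P$ is the parallel composition of $P_1$ and $P_2$ we have $\polygon(P)=\polygon(P_1)\oplus\polygon(P_2)$.

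I would then finish by induction on the number of elements. If $P$ has $n$ elements and is not a single element, it is a series or parallel composition of smaller series-parallel orders $P_1,P_2$ with $n_1,n_2$ elements, $n_1+n_2=n$; by the inductive hypothesis $\polygon(P_i)$ has at most $2n_i$ vertices, and by Lemma~\ref{lem:combine-polygons} the relevant operation (a $\Cup$ with a translate, or an $\oplus$) yields a polygon with at most $2n_1+2n_2=2n$ vertices. Equivalently, the full recursion exhibits $\polygon(P)$ as a formula in $\oplus$ and $\Cup$ applied to the $2n$ endpoints of the $n$ single-element segments, so Corollary~\ref{cor:formula-complexity} gives the same bound directly. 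There is no genuine obstacle here; the only points deserving care are that the base case contributes $2$ vertices per element (accounting for the factor of $2$ in the statement) rather than $1$, and the trivial observation that the translation occurring in the series case leaves the vertex count unchanged.
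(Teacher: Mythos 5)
Your proof is correct and follows essentially the same route as the paper: unfold the series-parallel decomposition into a formula of $\Cup$ and $\oplus$ operations over the $n$ two-vertex segments and apply Lemma~\ref{lem:combine-polygons} (equivalently Corollary~\ref{cor:formula-complexity}) to get the $2n$ bound. The only difference is that you spell out the induction and the harmless translation in the series case, which the paper leaves implicit.
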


However, the depth of the formula for $\polygon(P)$ may be linear, so using \autoref{cor:formula-complexity} to construct $\polygon(P)$ could  be inefficient. We now describe a faster algorithm. The key idea is to follow the same formula to build $\polygon(P)$, but to represent each intermediate result (a convex polygon) by a data structure that allows the $\Cup$ and $\oplus$ operations to be performed more quickly for pairs of polygons of unbalanced sizes.

\subsection{Data structure for fast unbalanced polygon merges}

Note that a Minkowski sum operation between a polygon of high complexity and a polygon of bounded complexity can change a constant fraction of the vertex coordinates, so to allow fast Minkowski sums our representation cannot store these coordinates explicitly.

\begin{lemma}
\label{lem:fast-combo}
It is possible to store convex polygons in a data structure such that destructively merging the representations of two polygons of $m$ and $n$ vertices respectively by a $\Cup$ or $\oplus$ operation (with $m<n$) can be performed in time $O(m\log((m+n)/m))$.
\end{lemma}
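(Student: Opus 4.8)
The plan is to store each convex polygon as a pair of search trees, one for its upper hull and one for its lower hull. Because a single $\oplus$ operation can perturb a constant fraction of the vertex coordinates, I will not store coordinates: instead, each hull tree holds, at its leaves in slope order, the \emph{edge vectors} of that chain, together with the single leftmost vertex (for the upper hull) or rightmost vertex (for the lower hull) that anchors the chain in the plane, and every internal node is augmented with the sum of the edge vectors in its subtree. From these subtree sums, the coordinates of any vertex — and hence the answer to any tangent query against the chain — are recoverable in $O(\log k)$ time for a $k$-edge chain, while a rigid translation of the whole polygon (needed for series composition) costs $O(1)$. For the running time I will use search trees in which finger search, splitting off a prefix or suffix of $d$ leaves, concatenating two lists, and discarding an entire subtree all cost $O(\log(d+1))$ relative to the appropriate finger or distance: level-linked $(2,3)$-trees have these bounds directly, and splay trees have them by the dynamic finger theorem. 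Throughout I will use the concavity estimate that nonnegative integers $d_1,\dots,d_m$ summing to $O(m+n)$ satisfy $\sum_j\log(d_j+2)=O(m\log\frac{m+n}{m})$ (recall $m<n$).

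The operation $\oplus$ is then essentially a sorted merge. The edges of $P\oplus Q$, taken as vectors, are exactly the edges of $P$ together with those of $Q$ placed in a single slope order, a coincident pair (one from each polygon) being replaced by its vector sum; the leftmost vertex of $P\oplus Q$ is the sum of the leftmost vertices of $P$ and $Q$, and likewise for the rightmost. So I merge the $m$-edge upper-hull list of the smaller polygon into the $n$-edge upper-hull list of the larger (and symmetrically for the lower hulls), performing $O(m)$ constant-time fixups for coincident slopes; merging two sorted lists of lengths $m<n$ held in these trees costs $O(m\log\frac{m+n}{m})$ by the classical merging bound of Brown and Tarjan, and the subtree-sum augmentation is restored in $O(1)$ per rotation, split, and concatenation.

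The operation $\Cup$ is the substantive case. I will reduce it to inserting the upper-hull vertices of the smaller polygon $P$, one at a time in order of $x$-coordinate (which is their slope order in $P$'s tree), into a dynamic copy of the larger polygon $Q$; after all insertions the maintained polygon is $\operatorname{conv}(Q\cup P)=P\Cup Q$, and for the upper hull it suffices to insert the upper-hull vertices of $P$, since a lower-hull vertex of $P$ lies strictly below $P$'s own upper hull and hence below the upper boundary of $\operatorname{conv}(P\cup Q)$, except at the two extreme abscissas, which are set by hand. Each insertion of a point $v_i$ proceeds by: (i) locating the position of $x(v_i)$ along the current upper hull by a finger search started from the position used for $v_{i-1}$; (ii) testing whether $v_i$ lies above the current hull there, and if not doing nothing; (iii) otherwise finding the left and right tangent vertices from $v_i$ to the current hull by searching outward from $x(v_i)$ in the two directions; and (iv) splitting off and discarding as a subtree the sub-chain strictly between those two tangent vertices, and splicing $v_i$ in. Every vertex that a tangent search in (iii) passes over while moving outward from $v_i$ is deleted in that same step, the abscissas queried in (i) are monotone increasing, and the discard in (iv) removes a whole subtree at unit cost; consequently each of the three families of search distances (those in (i), in the leftward part of (iii), and in the rightward part of (iii)) sums to $O(m+n)$, and with $m$ insertions the concavity estimate bounds the total cost by $O(m\log\frac{m+n}{m})$. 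The lower hulls are processed in mirror image, the new extreme vertices are recorded in $O(\log n)$ time, and by Lemma~\ref{lem:combine-polygons} the output has at most $m+n$ vertices.

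The main obstacle is to make the $\Cup$ construction fit the classical fast-merge running-time analysis. On the geometric side I must check that the incremental insertion of $v_i$ disturbs only a contiguous range of the current hull, that its two tangent vertices really are found by purely one-directional finger searches whose cost is the logarithm of the distance to the tangent vertex, and that the boundary cases (for instance $v_i$ lying outside the current abscissa range, or coinciding tangent vertices) are handled correctly. On the data-structure side I must verify that the amortized bookkeeping underlying the classical merge still goes through when the tree is also being shortened by subtree discards — that the three distance families genuinely sum to $O(m+n)$ once positions shift under deletions, and that prefix/suffix split, concatenation, and subtree discard each meet the $O(\log(\cdot))$ finger bound. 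For splay trees this last item is exactly the adaptation promised in the introduction: one routes every split and concatenation through splay operations and appeals to the dynamic finger theorem to bound their total cost, whereas with level-linked $(2,3)$-trees the bound is immediate. Everything else — the edge-vector representation, the $\oplus$ merge, and the lower-hull symmetry — is routine.
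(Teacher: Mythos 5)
Your proposal is correct and follows essentially the same route as the paper's proof: upper and lower hulls in separate finger-searchable balanced trees with coordinates stored implicitly (so that translation is $O(1)$), the $\oplus$ operation realized as a slope-order merge of edge lists, the $\Cup$ operation realized by inserting the smaller polygon's vertices in $x$-order with tangent searches and bulk deletion of the dominated sub-chain, and the $O(m\log((m+n)/m))$ bound obtained from the dynamic-finger/sorted-merge analysis. The only differences are cosmetic---you keep edge vectors at leaves with subtree sums where the paper keeps vertices with parent-relative offsets, and you note level-linked $(2,3)$-trees as an alternative to splay trees---and your concluding list of details to verify is at the same level of rigor that the paper itself leaves implicit.
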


\begin{proof}
We store the lower and upper hulls separately in a binary search tree data structure, in which each node represents a vertex of the polygon, and the inorder traversal of the tree gives the left-to-right order of the vertices. The node at the root of the tree stores the Cartesian coordinates of its vertex; each non-root node stores the vector difference between its coordinates and its parents' coordinates. Additionally, each node stores the vector difference to its clockwise neighbor around the polygon boundary. In this way, we can traverse any path in this tree and (by adding the stored vector difference) determine the coordinates of any vertex encountered along the path. We may also perform a rotation in the tree, and update the stored vector differences, in constant time per rotation.

We will keep this tree balanced (in an amortized sense) by using the splay tree balancing strategy~\cite{SleTar-JACM-85}: whenever we follow a search path in the tree, we will immediately perform a splay operation that through a sequence of double rotations moves the endpoint of the path to the root of the tree. By the dynamic finger property for splay trees~\cite{ColMisSch-SJC-00,Col-SJC-00}, a sequence of $m$ accesses in sequential order into a splay tree of size $n$ will take time $O(m\log((m+n)/m))$.

To compute the hull of the union (the $\Cup$ operation) we insert each vertex of the smaller polygon (by number of vertices), in left-to-right order, into the larger polygon. To insert a vertex $v$, we search the larger polygon to find the edges with the same $x$-coordinate as $v$ and use these edges to check whether $v$ belongs to the lower hull, the upper hull, or neither. If it belongs to one of the two hulls, we search the larger polygon again to find its two neighbors on the hull. By performing a splay so that these neighbors are rotated to the root of the binary tree, and then cutting the tree at these points, we may remove the vertices between $v$ and its new neighbors from the tree without having to consider those vertices one-by-one. We then create a new node for $v$ and add its two neighbors as the left and right child.

To compute the Minkowski sum (the $\oplus$ operation) we must simply merge the two sequences of edges of the two polygons by their slopes. We search for each edge slope in the smaller polygon. When its position is found, we splay the vertex node at the split position to the root of the tree, and then split the tree into its left and right subtrees, each with a copy of the root node.
We translate all vertices on one side of the split by the vector difference for the inserted edge (by adding that vector only to the root of its tree), and rejoin the trees.
\end{proof}

The proof of the dynamic finger property for splay trees~\cite{ColMisSch-SJC-00,Col-SJC-00}
is very complicated, but the same lemma would follow using any other dynamic binary search data structure with the finger property that also allows for an additional operation of  removing large consecutive sequences of elements in time proportional to the logarithm of the length of the removed sequence. For instance, level-linked 2-3 trees have the finger property~\cite{BroTar-SICOMP-80} but the proven time bound for deletions in this structure is not of the form we need, so additional analysis would be needed to make them applicable to our problem.

\subsection{Series-parallel algorithm}

With this data structure in hand, we are ready to prove the main result of this section.

\begin{theorem}
If $P$ is a series-parallel partial order, represented by its series-parallel decomposition tree, then $\polygon(P)$ has complexity $O(n)$ and may be constructed in time $O(n\log n)$.
\end{theorem}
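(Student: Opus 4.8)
\medskip
\noindent\textbf{Proof proposal.}
The complexity bound $O(n)$ requires no new work: it is exactly Corollary~\ref{cor:sp-complexity}, which already gives $\polygon(P)$ at most $2n$ vertices. For the running time, the plan is to walk the series-parallel decomposition tree of $P$ bottom-up, computing for each node the polygon of its sub-order by combining the polygons of its children with the corresponding operation ($\Cup$ for a series node, $\oplus$ for a parallel node), but keeping every intermediate polygon in the representation of Lemma~\ref{lem:fast-combo} rather than as an explicit vertex list. Each node would additionally record the vector sum of the weights of its elements, maintainable in $O(1)$ per node, which supplies the translation needed before a series $\Cup$. The one free choice at each step is which operand plays the ``small'' polygon in Lemma~\ref{lem:fast-combo}; since both $\Cup$ and $\oplus$ are commutative, I would always take it to be the one with fewer vertices. (If the decomposition tree has multi-way series or parallel nodes, I would first binarize it; this does not change the sub-order element counts that drive the analysis.)

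The cost analysis then reduces to a recurrence. A leaf is a one-element order, hence a one-vertex polygon: the base case. At an internal node whose two children carry sub-orders on $a\le b$ elements ($a+b=n'$), Corollary~\ref{cor:sp-complexity} bounds the two child polygons by $2a$ and $2b$ vertices, so the single merge performed there costs, by Lemma~\ref{lem:fast-combo}, $O\bigl(a\log(n'/a)\bigr)$ (the smaller operand has $O(a)$ vertices, and $a\le n'/2$). Letting $T(n')$ denote the total time charged to a node with $n'$ elements below it,
\[
T(n')\;\le\;T(a)+T(n'-a)+O\!\left(a\log\frac{n'}{a}\right),\qquad 1\le a\le n'/2 .
\]
This is the standard ``smaller half'' recurrence. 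I would solve it by induction with the ansatz $T(k)=O(k\log k)$ (plus the usual $O(k)$ additive slack to absorb the leaves): the step closes because, with base-two logarithms and $a\le n'/2$,
\[
a\log a+(n'-a)\log(n'-a)+a\log\frac{n'}{a}\;\le\;n'\log n',
\]
since the left two terms equal $n'\log n'-n'H(a/n')$ and $n'H(a/n')\ge a\log(n'/a)$, where $H$ is the binary entropy function. Applied at the root this gives $O(n\log n)$ total time; a final $O(n)$ traversal of the differential-coordinate search trees prints the explicit vertex list of $\polygon(P)$ (the translations from series compositions having been folded into the root coordinates along the way).

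The part I expect to carry the weight is precisely this running-time argument in the badly unbalanced regime --- decomposition trees that are long combs, where step by step a bounded polygon is absorbed into an ever-growing one. The crude bookkeeping (an element lies on the smaller side of $O(\log n)$ merges, each costing $O(\log n)$ via an $O(m\log n)$ merge bound) only yields $O(n\log^2 n)$; squeezing out the extra logarithm genuinely needs the sharpened per-merge cost $O(m\log((m+n)/m))$ of Lemma~\ref{lem:fast-combo} --- which in turn leans on the dynamic finger theorem for splay trees --- fed into the entropy telescoping above. The remaining points (commutativity of $\Cup$ and $\oplus$; that the vertex-count-versus-element-count slack of Corollary~\ref{cor:sp-complexity} costs only a constant factor; binarization of multi-way nodes; and that $x\mapsto x\log(C/x)$ is within a constant factor of its value at argument $2a$ throughout the relevant range) are routine.
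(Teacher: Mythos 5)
Your proposal is correct and follows essentially the same route as the paper: build $\polygon(P)$ bottom-up along the (binarized) decomposition tree using the splay-tree representation of Lemma~\ref{lem:fast-combo}, always treating the smaller operand as the one driving the merge cost, with the $O(n)$ vertex bound coming directly from Corollary~\ref{cor:sp-complexity}. The only difference is cosmetic: where you solve the ``smaller half'' recurrence $T(n')\le T(a)+T(n'-a)+O(a\log(n'/a))$ by an entropy induction, the paper charges each merge to the elements on its smaller side and telescopes $\sum_i\log(n_i/n_{i-1})=\log\prod_i(n_i/n_{i-1})=O(\log n)$ per element --- the same accounting in a slightly slicker form.
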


\begin{proof}
We follow the formula for constructing $\polygon(P)$ by $\Cup$ and $\oplus$ operations, using the data structure of \autoref{lem:fast-combo}. We charge each merge operation to the partial order elements on the smaller side of each merge. If a partial order element belongs to subproblems of sizes $n_0=1$, $n_1$, $\dots$, $n_h=n$ where $h$ is the height of the element, then the time charged to it is $\sum_i O(\log(n_i/n_{i-1}))=O(\log\prod_i(n_i/n_{i-1}))=O(\log n)$.
\end{proof}

\section{Bounded tree-width}
\label{sec:treewidth}

In this section, we consider partial orders whose underlying graphs have bounded treewidth.
The underlying graph is an undirected graph, obtained by forgetting the edge directions of the covering graph of the partial order. If the partial order is defined by reachability on directed acyclic graphs, the underlying graph of this reachability relation is obtained by forgetting the edge directions of the \emph{transitive reduction}, a minimal directed acyclic graph with the same reachability relation as the given one.

\subsection{Fences and polytrees}

\begin{figure}[t]
\centering\includegraphics[width=0.9\textwidth]{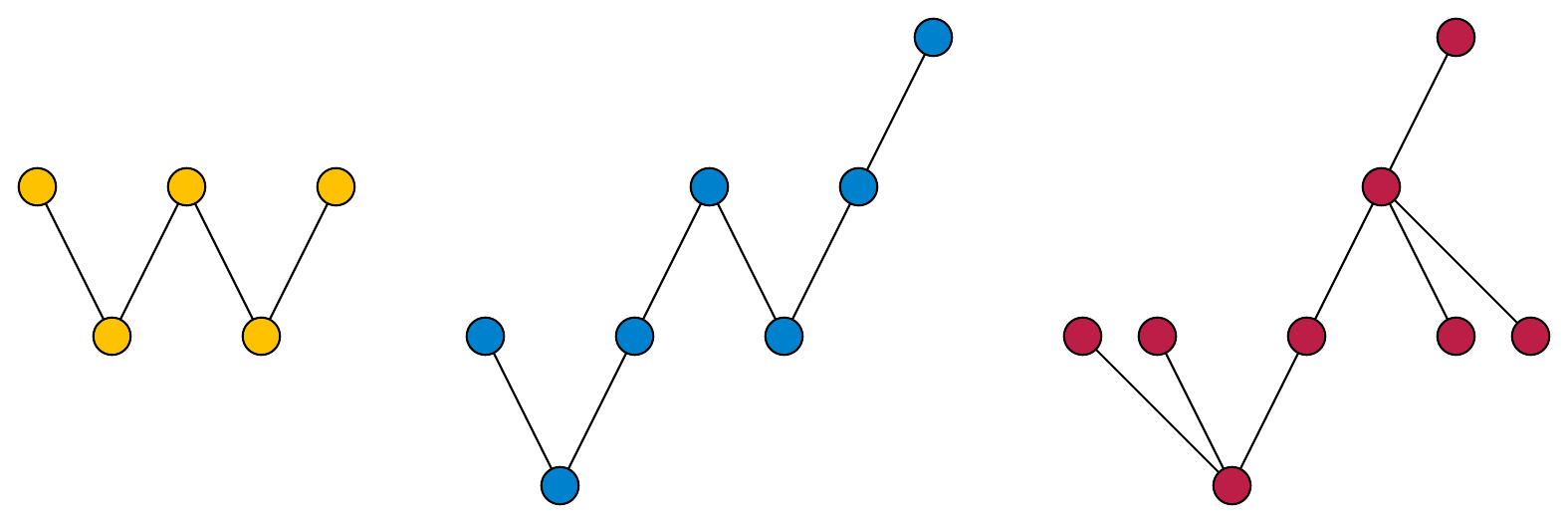}
\caption{A fence (left), generalized fence (center), and polytree (right)}
\label{fig:fences}
\end{figure}

As a warm-up, we first consider the case where the underlying undirected graph has treewidth one: that is, it is a tree or a path.
We can bound both the number of solutions to the parametric closure problem and the time for listing all the solutions using a divide-and-conquer approach that we will later generalize to larger values of the treewidth.

\begin{definition}
A \emph{fence} or \emph{zigzag poset} is a partially ordered set whose transitive reduction is a path with alternating edge orientations. A \emph{polytree} is a directed graph formed by orienting the edges of an undirected tree.
An \emph{oriented path} or \emph{up-down poset} is a directed graph formed by orienting the edges of an undirected simple path.
A \emph{generalized fence} is a partially ordered set whose transitive reduction is an oriented path.
(See \autoref{fig:fences}.)
\end{definition}

\autoref{fig:N} depicts an example of a fence.
Our definition of a generalized fence follows Munarini~\cite{Mun-SJDM-06} and should be distinguished from other authors who define generalized fences as the reachability orders of polytrees~\cite{Rus-Ord-89}.

\begin{theorem}
\label{thm:generalized-fence}
Let $P$ be a generalized fence.
Then $\polygon(P)$ has complexity $O(n^2)$ and can be constructed in time $O(n^2\log n)$.
\end{theorem}

\begin{proof}
We let $F(n)$ denote the maximum number of vertices of $\polygon(P)$ where $P$ ranges over all $n$-element generalized fences. To bound $F(n)$, let $P$ be a generalized fence of maximum complexity, and let $v$ be the middle element of the oriented path from which $P$ is defined.
Let $L^-$ be the generalized fence determined by the half of the path to the left of $P$, after removing all elements below $v$ in this half of the path, and let $L^+$ be the generalized fence determined by the left half-path after removing all elements above $v$. Define $R^-$ and $R^+$ in the same way for the right half of the path. Then the downsets of $P$ that contain $v$ correspond one-for-one with downsets of $L^-\cup R^-$, where the correspondence is obtained by adding back $v$ and all elements below $v$. The downsets of $P$ that do not contain $v$ are exactly the downsets of $L^+\cup R^+$. Therefore, we have the formula
\[
\polygon(P) = (\polygon(L^-)\oplus\polygon(R^-))\Cup (\polygon(L^-)\oplus\polygon(R^-)),
\]
with a suitable translation applied before performing the $\Cup$ operation.
Since all four of $L^-$, $L^+$, $R^-$, and $R^+$ have at most $\lfloor n/2\rfloor$ elements,
we can use this formula and \autoref{lem:combine-polygons} to derive a recurrence for the complexity $F(n)$:
\[
F(n)\le 4F(\lfloor n/2\rfloor),
\]
with the base case $F(0)=1$.
This recurrence has a standard form familiar from divide-and-conquer algorithms,
whose solution is $F(n)=O(n^2)$. Using the same formula for $\polygon(P)$ together with \autoref{lem:combine-polygons} to construct the polygon gives another recurrence for the running time $T(n)$:
\[
T(n) \le 4T(\lfloor n/2\rfloor)+O(F(n))=4T(\lfloor n/2\rfloor)+O(n^2)=O(n^2\log n).
\]
\end{proof}

We may apply the same divide-and-conquer method to construct the solutions to the parametric closure problem for an oriented tree, but we will not always be able to obtain a perfect split into two subproblems of size $n/2$. As a consequence, the method becomes somewhat more complex although its asymptotic time bound will remain the same. To pick the splitting vertex $v$ we use the following separator theorem for trees, which appears already in the work of Jordan (1869)~\cite{Jor-JRAM-69}:

\begin{lemma}
\label{lem:tree-halver}
Any $n$-vertex undirected tree has a vertex $v$ such that the removal of $v$ splits the remaining tree into components of at most $n/2$ vertices each.
\end{lemma}

\begin{proof}
Let $v$ be a vertex of the tree $T$, chosen to minimize the number $e$ of vertices by which the largest component of $T\setminus v$ exceeds $n/2$, and suppose for a contradiction that $e$ is not zero. Let $C$ be the component of $T\setminus v$ with $n/2+e>n/2$ vertices, and let $w$ be the unique neighbor of $v$ in $C$. Then removing $w$ from $T$ produces components that are strict subsets of $C$, together with a component containing $v$ with $n/2-e<n/2$ vertices.
Therefore, all components of $T\setminus w$ have fewer than $n/2+e$ vertices, contradicting the choice of $v$ as the minimizer of $e$. This contradiction implies that $e$ must be zero and therefore that removing $v$ splits the remaining tree into components of at most $n/2$ vertices each.
\end{proof}

\begin{theorem}
\label{thm:polytree}
Let $P$ be the reachability order of a polytree $T$.
Then $\polygon(P)$ has complexity $O(n^2)$ and can be constructed in time $O(n^2\log n)$.
\end{theorem}

\begin{proof}
As in \autoref{thm:generalized-fence} we remove $v$ from $T$, and for each component $C_i$ of the resulting forest define partial orders $C_i^-$ and $C_i^+$ by removing the elements below or above $C_i$ respectively. And as in \autoref{thm:generalized-fence} we can use this decomposition to derive a formula
\[
\polygon(P)=(\polygon(C_1^-)\oplus\polygon(C_2^-)\oplus\cdots)\Cup (\polygon(C_1^+)\oplus\polygon(C_2^+)\oplus\cdots)
\]
By \autoref{lem:combine-polygons}, it follows that the total number of vertices of $\polygon(P)$ is at most the total number of leaf-level subproblems obtained by recursively performing this decomposition.
Because the size of the remaining subtrees goes down by a factor of two in each level of recursion, there are at most $\log_2 n$ levels of recursion. Each level of recursion also at most doubles the number of subproblems that each remaining tree vertex belongs to: a vertex $u$ that belongs to a component $C_i$ of the remaining tree vertices will also be included in at most two subproblems $C_i^-$ and $C_i^+$. Therefore, the contribution of any one tree vertex to the total complexity of $\polygon(P)$ is $O(n)$ and the total complexity of $\polygon(P)$ is $O(n^2)$.

We cannot use \autoref{lem:combine-polygons} directly to evaluate this formula, because that lemma allows only binary combinations. However, using \autoref{lem:fast-combo} instead allows any formula of $\oplus$ and $\Cup$ operations of size $s$ to be evaluated in time $O(s\log s)$. Applying this method to the recursive formula derived above gives us a total time of $O(n^2\log n)$.
\end{proof}

In the following sections, we apply the same method to more general graphs of bounded treewidth.

\subsection{Tree decompositions}

\begin{definition}
A \emph{tree-decomposition} of an undirected graph $G$ is a tree $T$ whose vertices are called \emph{bags},
each of which is associated with a set of vertices of $G$, with the following two properties:
\begin{itemize}
\item Each vertex of $G$ belongs to a set of bags that induces a connected subtree of~$T$, and
\item For each edge of $G$ there exists a bag containing both endpoints of the edge.
\end{itemize}
The \emph{width} of a tree-decomposition is one less than the maximum number of vertices in a bag, and the \emph{treewidth} of $G$ is the minimum width of a tree-decomposition of~$G$. By extension, we define the treewidth of a directed graph $D$ to equal the treewidth of the undirected graph obtained by forgetting the orientations of the edges of $D$.
\end{definition}

The treewidth, and an optimal tree-decomposition, can be found in an amount of time that is linear in the number of vertices of a given graph but exponential in its width~\cite{Bod-SJC-96}.

\begin{definition}
A tree-decomposition of width $t$ is \emph{minimal} if no two adjacent bags have a union with at most $t+1$ elements.
\end{definition}

Every graph of treewidth $t$ has a minimal decomposition of width $t$.
For, given a decomposition that is not minimal (in which two adjacent bags have a small union), the edge connecting those two bags could be contracted, and the bags merged, decreasing the number of bags in the decomposition without increasing the width of the decomposition. This property allows us to control the number of bags in the decomposition:

\begin{lemma}
A minimal tree-decomposition of width $t$ of an $n$-vertex graph has at most $n-t$ bags.
\end{lemma}

\begin{proof}
Consider the bags in the order given by a breadth-first traversal, starting from a bag of cardinality $t+1$. The first bag has $t+1$ vertices in it, and each successive bag in this traversal must have at least one new vertex not seen in previous bags, for otherwise it would be a subset of the bag that is its parent in the breadth-first traversal and the decomposition would not be minimal. Therefore, the number of vertices is at least equal to the number of bags plus $t$. Equivalently, the number of bags is at most $n-t$.
\end{proof}

\subsection{Partial orders of low treewidth}

\begin{lemma}
\label{lem:tw-formula}
Let $P$ be a partial order with $n$ elements whose underlying graph has treewidth~$t=O(1)$.
Then $\polygon(P)$ can be represented as a formula of $\Cup$ and $\oplus$ operations with 
total size $O(n^{t+2})$ and depth $O(\log n)$.
\end{lemma}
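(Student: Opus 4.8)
The plan is to prove a slightly stronger statement by recursion on the number of bags of a tree-decomposition~$T$ of the transitive-reduction DAG~$G$ (which, as noted above, we may take to have maximum degree three and $O(n)$ bags). The strengthening is to allow an arbitrary additive translation vector~$c$ to be prepended to the instance, so that the output is a formula for $\{c\}\oplus\polygon(P)$: this costs nothing, since translations compose and a translation is just an $\oplus$ with a one-point polygon, and it lets us fold all bookkeeping shifts into recursive calls rather than spending an extra operation on them at every level.

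For the recursive step I would apply Lemma~\ref{lem:tree-splitter} to $T$: deleting one bag $x$ splits $T$ into two forests $F_1,F_2$, each with at most $2/3$ of the bags. Let $B_x$ be the (at most $w+1$) vertices in bag $x$, and let $V_i$ be the set of vertices appearing in bags of $F_i$. Standard tree-decomposition properties give $V_1\cap V_2\subseteq B_x$ and that $G$ has no edge between $V_1\setminus B_x$ and $V_2\setminus B_x$, so $B_x$ separates $G$.

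Next I would decompose $\lowersets(P)$ by conditioning on the trace $A=L\cap B_x$ of a lower set. Because $B_x$ separates $G$, a lower set $L$ with $L\cap B_x=A$ is exactly a union, disjoint over $B_x$, of a lower set of $P|_{V_1}$ and a lower set of $P|_{V_2}$ that both restrict to $A$ on $B_x$; and fixing the membership of the $B_x$-vertices on side $i$ forces certain further vertices of $V_i$ in (those below an included $B_x$-vertex) or out (those above an excluded one), leaving a residual ground set $V_i^\circ\subseteq V_i\setminus B_x$ on which the admissible sets are precisely the lower sets of the induced subposet $P|_{V_i^\circ}$. The weight of a reassembled $L$ is the sum of the two residual weights plus a translation depending only on $A$ (with an inclusion--exclusion correction on $B_x$). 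Projecting and taking convex hulls therefore gives that $\{c\}\oplus\polygon(P)$ is the $\Cup$, over the at most $2^{w+1}$ consistent traces $A$ (treating an inconsistent $A$ as contributing the empty polygon), of a single $\oplus$ of two recursively computed polygons, one for each residual instance, with the $A$-dependent translations absorbed into the recursive offsets. A balanced binary expression tree for this $\Cup$ has height $w+1$, and each of its leaves is one $\oplus$-node above two recursive subformulas, so the formula height grows by exactly $w+2$ per level of recursion. Since each residual instance uses at most $2/3$ of the bags there are $\log_{3/2}n+O(1)$ levels, and the base case (constantly many bags, hence $O(w)$ elements, handled by brute force) has height $O(w)=(w+2)\cdot O(1)$; summing yields height $(w+2)(\log_{3/2}n+O(1))$.

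The step I expect to be the crux is checking that the residual instance $P|_{V_i^\circ}$, together with the restriction of $F_i$ to $V_i^\circ$, is again a legitimate instance of the recursive problem --- in particular that this restricted decomposition still covers every edge of the transitive reduction of $P|_{V_i^\circ}$ and still has width at most $w$. The observation that makes this work is that no vertex removed from side $i$ can lie strictly between two surviving vertices in $\le_P$: a $B_x$-vertex strictly between $u,v\in V_i^\circ$ would itself be forced and hence removed; a forced-in vertex between them would force $u$ in; a forced-out vertex would force $v$ out. Hence every cover of $P|_{V_i^\circ}$ is already a cover of $P$ (restricted to the pair), and since every bag of $T$ that contains a vertex of $V_i\setminus B_x$ lies entirely within $F_i$, each such cover is witnessed by a bag of $F_i$. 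Carrying out this argument rigorously, together with the routine verification of the translation arithmetic and the consistency and emptiness conditions on $A$, is the main work; the geometric and counting parts then follow immediately from Lemma~\ref{lem:combine-polygons}.
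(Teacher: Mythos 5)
Your proposal is correct and follows essentially the same route as the paper: use Lemma~\ref{lem:tree-splitter} to pick a separator bag, branch with $\Cup$ on the membership of its at most $w+1$ vertices, combine the two now-independent subforests with $\oplus$, and account for a height increase of $w+2$ per recursion level over $\log_{3/2}n+O(1)$ levels. The only differences are presentational --- you enumerate all $2^{w+1}$ traces in one balanced $\Cup$-tree and carry explicit translations and residual instances, whereas the paper makes the binary choices one vertex at a time and keeps the constraints as sets $L,U$ on the original poset, which sidesteps the re-verification of the restricted decomposition that you flag as the crux.
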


\begin{proof}
We fix a minimal tree-decomposition of width $t$, and we use \autoref{lem:tree-halver} to recursively partition this decomposition by repeatedly removing a bag $B$ whose removal splits the remaining part of the decomposition into connected components $C_i$ of size (number of bags) at most half the size at the previous level of the partition.

If $B$ has $b$ elements, then there are at most $2^b$ partitions of $B$ into two subsets $L$ and $U$, such that $L$ is a downset in $B$ (with $U$ as its complementary upset).
We will use these partitions to specify the elements of $B$ that should be part of an eventual downset ($L$) or should not be part of an eventual downset ($U$). Let $C_i$ be one of the connected components formed from the tree-decomposition by the removal of $B$, and $(L_j,U_j)$ be one of the partitions of $B$ into a downset $L=L_j$ and an upset $U=U_j$. Then in any downset of the whole problem consistent with the partition $(L_j,U_j)$, the elements of $C_i$ that are below at least one element of $L_j$ must also be included in the downset, and the elements of $C_i$ that are above at least one element of $U_j$ must also be included in the upper set. We define $C_i^j$ to be the subproblem of the closure problem consisting of the remaining elements: the ones that belong to bags of $C_i$ and are neither below a member of $L_j$ nor above a member of $U_j$.

With this notation, we can express $\polygon(P)$ as a formula
\[
\polygon(P)=\Cup_{j=1}^{2^b} \oplus_i \polygon(C_i^j),
\]
where each term in the outer $\Cup$ is translated appropriately according to the weights of the additional elements in or below $L_j$ that belong to the downsets for that term but not to the subproblems included in that term.

Each of the at most $\log_2 n$ levels of this recursive partition causes the number of subproblems that each vertex participates in to increase by a factor of at most $2^{t+1}$. Therefore, in the whole recursive partition, the total number of subproblems that a single vertex participates in is $O(n^{t+1})$. Since there are $n$ vertices, this partition gives rise to a formula with total complexity $O(n^{t+2})$.
\end{proof}

Applying \autoref{lem:fast-combo} to the formula of \autoref{lem:tw-formula} gives us the following result:

\begin{theorem}
\label{thm:treewidth}
Let $P$ be a partial order with $n$ elements whose underlying graph has treewidth~$t=O(1)$.
Then $\polygon(P)$ has complexity $O(n^{t+2})$ and can be constructed in time $O(n^{t+2}\log n)$.
\end{theorem}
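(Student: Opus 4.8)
The plan is to obtain the theorem as an essentially immediate consequence of Lemma~\ref{lem:tw-formula} and Corollary~\ref{cor:formula-complexity}. Lemma~\ref{lem:tw-formula} already presents $\polygon(P)$ as a formula (expression tree) over the operations $\Cup$ and $\oplus$ whose height is $h \le (w+2)(\log_{3/2} n + O(1))$. First I would verify what sits at the leaves of this formula: a leaf of the recursive decomposition in Lemma~\ref{lem:tw-formula} is reached exactly when the current subforest $F$ is a single bag $B$ all of whose elements have already been committed to $L$ or $U$, and at that point the represented family of lower sets is either empty or a single set, whose projection is a single point (the empty case contributes no vertices and is absorbed by the identities $X \Cup \emptyset = X$ and $X \oplus \emptyset = \emptyset$). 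Since a binary expression tree of height $h$ has at most $2^h$ leaves, the formula combines at most
\[
2^h \;\le\; n^{(w+2)\log_{3/2}2}\cdot 2^{O(w)} \;=\; n^{O(w)}
\]
points in the plane, using that $\log_{3/2}2$ is a constant and, for $n\ge 2$, $2^{O(w)}\le n^{O(w)}$; the case $n=1$ is trivial.

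I would then apply Corollary~\ref{cor:formula-complexity} with its point count set to the leaf count $N = n^{O(w)}$ and its height parameter set to $h = O(w\log n)$. This yields at once that $\polygon(P)$ has at most $N = n^{O(w)}$ vertices and that it can be built from the formula in time $O(Nh) = n^{O(w)}$. The remaining task is to confirm that the formula itself is produced within the same bound: compute a width-$w$ tree decomposition (linear in $n$ for fixed $w$), reduce it to maximum degree three with $O(n)$ bags, and run the recursion of Lemma~\ref{lem:tw-formula}, which creates $O(N)$ formula nodes, each needing only a splitter selection via Lemma~\ref{lem:tree-splitter} (linear time) and constant bookkeeping. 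Hence the whole construction is $n^{O(w)}$.

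I do not expect a serious obstacle here, since the genuinely substantive argument --- interleaving $\Cup$-branchings on bag vertices with $\oplus$-splits of the decomposition tree, and bounding the resulting height --- is already carried out in Lemma~\ref{lem:tw-formula}. The only points needing care are (i) pinning down the leaf structure of that recursion so that Corollary~\ref{cor:formula-complexity} applies verbatim, and (ii) remembering to use the formula version of the bound rather than the circuit version: as noted after Lemma~\ref{lem:tw-formula}, Corollary~\ref{cor:formula-complexity} has not been extended to circuits, so we must pay the $2^h$ leaf blowup instead of the tighter $O(2^w n)$ circuit size. The arithmetic converting $2^h$ to $n^{O(w)}$, and the observation that the auxiliary tree-decomposition step costs only $f(w)\cdot n$, are routine.
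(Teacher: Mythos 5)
Your proposal is correct and follows essentially the same route as the paper: invoke Lemma~\ref{lem:tw-formula} to obtain a formula of height $h=(w+2)(\log_{3/2}n+O(1))$, bound its size by $2^h=n^{O(w)}$, and apply Corollary~\ref{cor:formula-complexity}. The extra details you supply (leaf structure, the empty-polygon case, and the cost of building the tree decomposition) are consistent with, and slightly more explicit than, the paper's argument.
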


We note that for $t=1$ this is larger by a linear factor than the bounds of \autoref{thm:generalized-fence} and \autoref{thm:polytree}. We do not know whether the dependence on $t$ in the exponent of the complexity bound is necessary, but unless it can be improved it acts as an obstacle to the existence fixed-parameter-tractable algorithms parameterized by treewidth for the construction of $\polygon(P)$.

\subsection{Incidence posets}

The partial orders defined by reachability on graphs of low treewidth include, in particular, the incidence posets of undirected graphs of bounded treewidth. This is because the incidence poset of an undirected graph $G$ is the same as the reachability poset on a graph obtained from $G$ by subdividing each undirected edge, and orienting each of the two resulting new edges outward from
the subdivision point (\autoref{fig:incidence}). This replacement cannot increase the treewidth, as the following folklore lemma shows.

\begin{figure}[t]
\centering\includegraphics[width=0.8\textwidth]{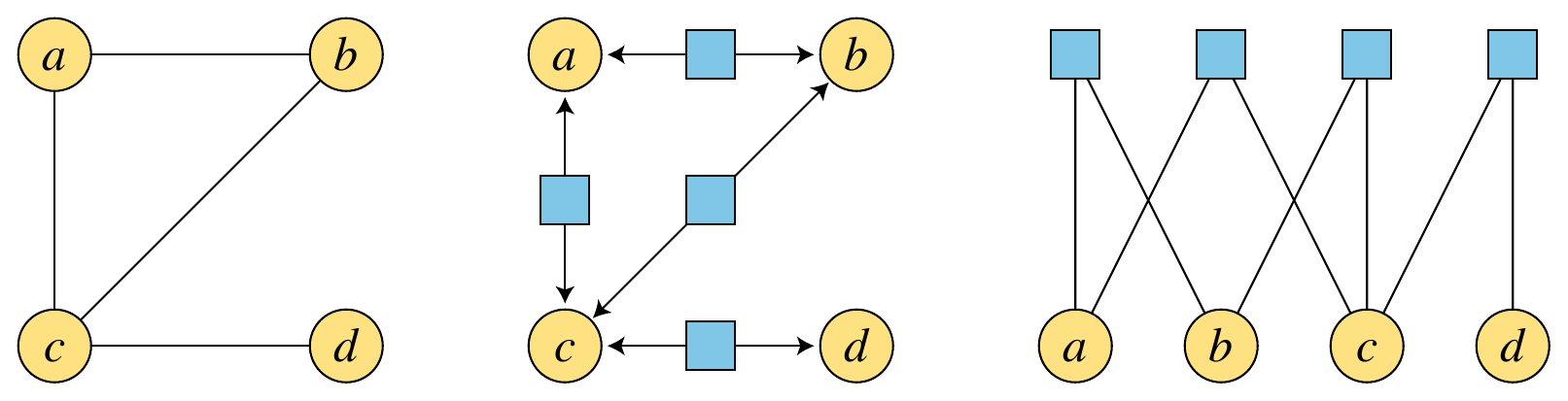}
\caption{An undirected graph (left), the directed graph obtained by replacing each undirected edge by two oppositely-ordered directed edges (center), and a partial order that is simultaneously the incidence poset of the undirected graph and the reachability poset of the directed graph (right).}
\label{fig:incidence}
\end{figure}

\begin{lemma}
Let $H$ be obtained from $G$ by replacing one or more edges of $G$ by paths.
Then the treewidth of $H$ is no higher than that of $G$.
\end{lemma}

\begin{proof}
We may assume that $H$ is obtained from $G$ by replacing only a single edge $e$ by a two-edge path, for any other instance of the lemma may be obtained by multiple replacements of this type.
If $G$ is a tree, then $H$ is also a tree, and both have treewidth $1$. Otherwise, the treewidth of $G$ is at least $2$, and an optimal tree-decomposition for $G$ must contain at least one bag $B$ that contains both endpoints of $e$. A tree-decomposition for $H$ of the same width may be obtained by attaching to $B$ another bag that  contains three vertices: the endpoints of $e$, and the subdivision point of the replacement path for $e$.
\end{proof}

One of the important applications of the closure problem, to transportation planning, uses the incidence posets of graphs whose edges represent a collection of point-to-point truck routes~\cite{Bal-MS-70,Rhy-MS-70}. Our methods can be applied to the parametric and bicriterion versions of this problem when its graph has small treewidth.

The following result is an immediate corollary of \autoref{thm:treewidth}.

\begin{corollary}
Let $P$ be the incidence poset of an $n$-vertex graph of treewidth~$w$.
Then $\polygon(P)$ has complexity $O(n^{w+2})$ and can be constructed in time $O(n^{w+2}\log n)$.
\end{corollary}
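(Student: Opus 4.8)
The plan is to reduce directly to Theorem~\ref{thm:treewidth} by recognizing the transitive reduction of an incidence poset as a subdivision of the underlying graph. First I would observe that the incidence poset $P$ of $G$ has one element for each vertex and each edge of $G$, and that the only strict relations in $P$ are $v<e$ whenever $v$ is an endpoint of $e$; each such relation is a cover relation, since a vertex-element is minimal and an edge-element is maximal in $P$, so nothing lies strictly between them. Consequently the Hasse diagram of $P$, viewed as an undirected graph, is exactly the subdivision $S(G)$ of $G$ obtained by inserting one new degree-two vertex into the interior of every edge. Orienting each edge of $S(G)$ from its original-vertex endpoint toward its edge-vertex endpoint produces an acyclic orientation whose reachability order is precisely $P$ and which coincides with its own transitive reduction; thus $P$ fits the hypotheses of Theorem~\ref{thm:treewidth} as soon as we bound the treewidth of $S(G)$.

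Next I would check that $S(G)$ has treewidth at most $w$. This is folklore: given a tree decomposition of $G$ of width $w$, for each edge $uv$ of $G$ attach a new leaf bag $\{u,v,e_{uv}\}$ to any bag containing both $u$ and $v$; this is a valid tree decomposition of $S(G)$ of width $\max(w,2)$, which equals $w$ whenever $w\ge 2$, while for $w\le 1$ the graph $G$, and hence $S(G)$, is a forest (or edgeless) and the bound is immediate.

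Finally I would tally the elements and invoke the theorem. Writing $m=|E(G)|\le\binom{n}{2}$, the poset $P$ has $n+m=O(n^2)$ elements, so Theorem~\ref{thm:treewidth} applied to $P$ gives that $\polygon(P)$ has complexity $(n+m)^{O(w)}=n^{O(w)}$ and can be constructed in time $(n+m)^{O(w)}=n^{O(w)}$, as claimed. I do not anticipate a real obstacle here: the argument is a routine translation, and the only points needing care are the verification that the transitive reduction of an incidence poset is a subdivision (rather than a graph of larger treewidth) and the bookkeeping that $P$ has $\Theta(n+m)$ elements, so that the final bound is correctly expressed in terms of the vertex count $n$.
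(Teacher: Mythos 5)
Your proposal is correct and matches the paper's (implicit) argument: the paper justifies this corollary exactly by noting that the transitive reduction of an incidence poset is a subdivision of $G$ with the same treewidth, so Theorem~\ref{thm:treewidth} applies directly. Your additional care about the element count $n+m=O(n^2)$ and the $w\le 1$ edge case is sound and only makes explicit what the paper leaves implicit.
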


\section{Bounded width}
\label{sec:width}

The \emph{width} of a partially ordered set is the size of its largest antichain. Partially ordered sets of bounded width arise, for instance, in the version histories of a distributed version control repository controlled by a small set of developers (with the assumption that each developer maintains only a single branch of the version history). In this application, there may be many elements of the partially ordered set (versions of the repository) but the width may be bounded by the number of developers~\cite{BanDevEpp-ANALCO-14}. A downset in this application is a set of versions that could possibly describe the simultaneous states of all the developers at some past moment in the history of the repository.

The downsets of a partially ordered set correspond one-for-one with its independent sets: each downset is uniquely determined by an independent set, the set of maximal elements of the downset. Therefore, in a partially ordered set of width $w$ with $n$ elements there can be at most $O(n^w)$ downsets. More precisely, by Dilworth's theorem~\cite{Dil-AM-50}, every partial order of width $w$ can be partitioned into $w$ totally-ordered subsets (chains); the number of downsets can be at most the product of the numbers of downsets of these chains. This product takes its maximum value $(n/w+1)^w$ when the chains all have equal size.
As we show, however, there is an even tighter bound for the complexity of $\polygon(P)$.

We consider first the case $w=2$. The more general case of bounded width will follow by similar reasoning. Thus, let $P$ be a partially ordered set with $n$ elements and width two. By Dilworth's theorem, $P$ can be decomposed into two chains, and every downset $L$ can be described by a pair of integer coordinates $(x,y)$, where $x$ is the number of elements of $L$ that belong to the first chain and $y$ is the number of elements of $L$ in the second chain.

\begin{observation}
For a width-two partial order $P$ as described above, the set of pairs of coordinates $(x,y)$ describing the downsets of $P$ forms an orthogonally convex subset of the integer grid $[0,n]^2$, and the edges between points one unit apart in this set of grid points form the covering graph of the distributive lattice of downsets.
\end{observation}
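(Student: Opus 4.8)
The plan is to fix a Dilworth decomposition of $P$ into two chains $C_1$ and $C_2$, with $|C_1|=p$ and $|C_2|=q$ (so $p+q=n$), and to analyze the map $L\mapsto(x,y)=(|L\cap C_1|,|L\cap C_2|)$. First I would check that this map is injective: for a lower set $L$, the restriction $L\cap C_i$ is a down-set of the chain $C_i$, hence an initial segment of it, and so is determined by its cardinality alone. Writing $A_x$ for the bottom $x$ elements of $C_1$ and $B_y$ for the bottom $y$ elements of $C_2$, this says $L=A_x\cup B_y$, so $L$ is recovered from $(x,y)$. Consequently $\lowersets(P)$ is in bijection with the set $G$ of pairs $(x,y)\in[0,p]\times[0,q]\subseteq[0,n]^2$ for which $A_x\cup B_y$ really is a lower set of $P$, and the inclusion order on $\lowersets(P)$ becomes the coordinatewise order on $G$, since $A_x\cup B_y\subseteq A_{x'}\cup B_{y'}$ iff $x\le x'$ and $y\le y'$.

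Next I would prove orthogonal convexity, i.e.\ that $G$ meets every horizontal and every vertical grid line in a contiguous block. Take $(x,y_1),(x,y_2)\in G$ with $y_1<y<y_2$; I claim $A_x\cup B_y$ is a lower set. Let $z\in A_x\cup B_y$ and $w<_P z$; I must show $w\in A_x\cup B_y$. If $z\in A_x$, then since $A_x\cup B_{y_1}$ is a lower set and $y_1\le y$ we get $w\in A_x\cup B_{y_1}\subseteq A_x\cup B_y$. If instead $z\in B_y\subseteq B_{y_2}$, then since $A_x\cup B_{y_2}$ is a lower set we get $w\in A_x\cup B_{y_2}$; if $w\in A_x$ we are done, and otherwise $w\in B_{y_2}\cap C_2$ lies below $z$ within the chain $C_2$, so as $z$ is among the bottom $y$ elements of $C_2$ so is $w$, i.e.\ $w\in B_y$. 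The same argument with $C_1$ and $C_2$ interchanged handles vertical lines. (One could also phrase this by noting that $(x,y)\in G$ iff $y\ge g(x)$ and $x\ge h(y)$, where $g(x)$, $h(y)$ count the elements of the opposite chain forced below $A_x$ resp.\ $B_y$, and $g,h$ are monotone non-decreasing; I would use whichever phrasing reads more cleanly.)

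Finally I would identify the unit-distance edges of $G$ with the covering relation of the lattice. A unit step $(x,y)\to(x+1,y)$ with both endpoints in $G$ corresponds to $L'=A_{x+1}\cup B_y=L\cup\{e\}$, where $e$ is the $(x+1)$-st element of $C_1$; a set and that set together with one further element are always a covering pair, so this is a cover of $\lowersets(P)$, and symmetrically for vertical steps. Conversely, if $L\lessdot L'$ in $\lowersets(P)$ then $|L'\setminus L|=1$, since otherwise a $P$-minimal element $m$ of $L'\setminus L$ would give a lower set $L\cup\{m\}$ strictly between $L$ and $L'$; and $L'\setminus L=\{m\}$ with $m\in C_1$ (resp.\ $C_2$) forces the coordinate change $(1,0)$ (resp.\ $(0,1)$), a unit step. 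Because the lattice order is the coordinatewise order on $G$, this shows the unit-distance edges of $G$ are exactly its Hasse-diagram edges. I expect the only genuine subtlety to be the orthogonal-convexity step — specifically, the need to invoke $(x,y_1)$ or $(x,y_2)$ according to whether the offending predecessor $w$ of $z$ lies in the same chain as $z$ or in the other one; the rest is bookkeeping with initial segments of chains.
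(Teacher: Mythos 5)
Your proof is correct and complete: the injectivity via initial segments of the two chains, the column/row contiguity argument using the two witnesses $(x,y_1)$ and $(x,y_2)$ (with the case split on whether the predecessor $w$ lies in the same chain as $z$), and the identification of covers with single-element differences are all sound. The paper states this observation without any proof, treating it as immediate, so your argument simply supplies the intended justification in full detail rather than taking a different route.
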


\begin{figure}[t]
\centering\includegraphics[scale=0.3]{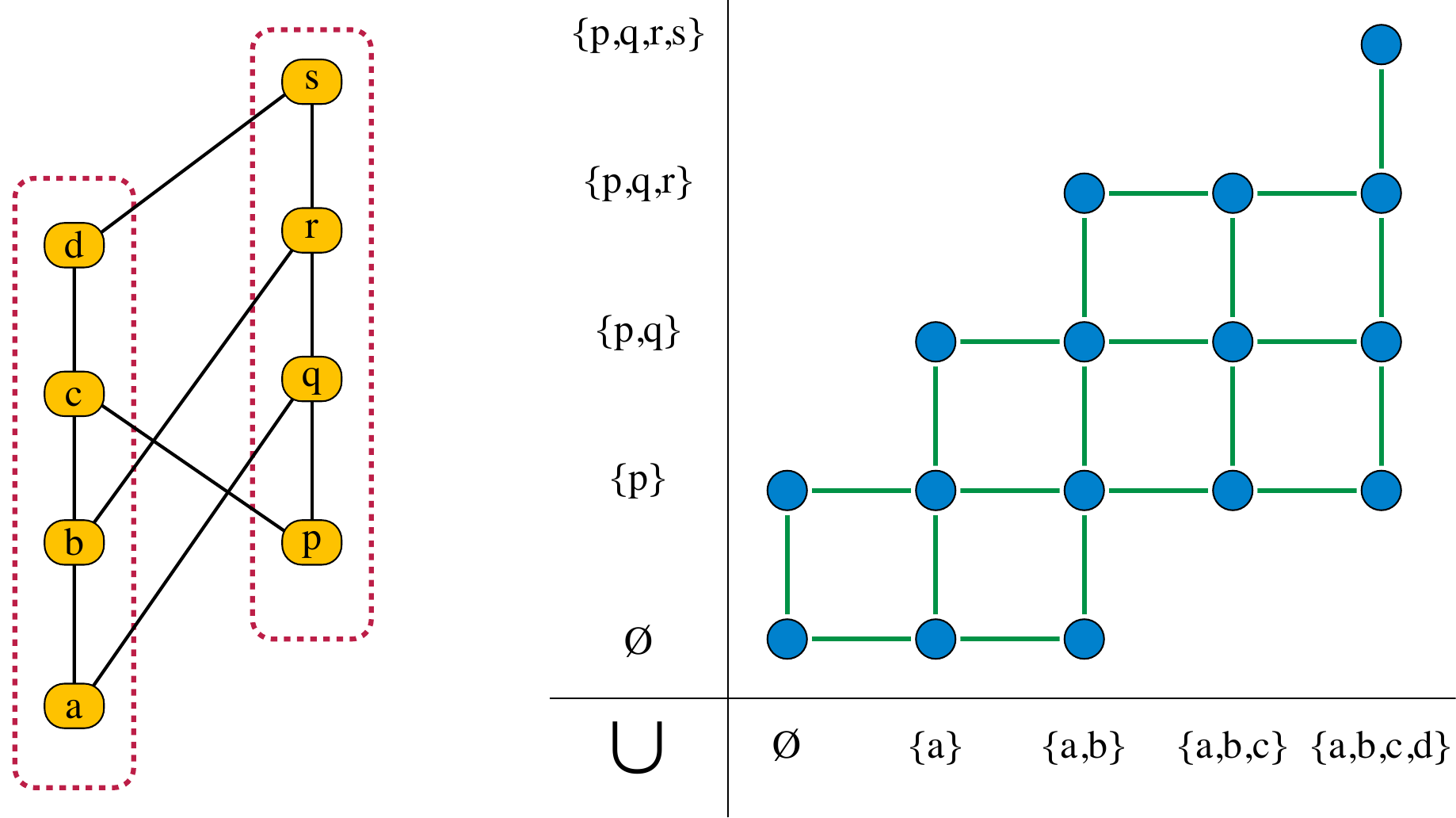}
\caption{Hasse diagram of a width-two partial order partitioned into two chains (left) and the subset of the integer lattice formed by its downsets (right)}
\label{fig:width2polyomino}
\end{figure}

\autoref{fig:width2polyomino} depicts an example.
We can use the following definition and observation to partition the parametric closure problem for~$P$ into smaller subproblems of the same type.

\begin{definition}
Let $R$ be any grid rectangle. Then $\subproblem(R)$ is the family of downsets of $P$ whose grid points lie in $R$.
\end{definition}

\begin{observation}
\label{obs:w2-partial}
For every grid rectangle~$R$, each set in $\subproblem(R)$ can be represented uniquely as a disjoint union $A\cup B$ where $A$ is the set of all elements of $P$ whose (single) coordinate value is below or to the left of $R$, and $B$ is a downset of the restriction of $P$ to the elements whose coordinate value is within~$R$.
\end{observation}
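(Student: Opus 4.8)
The plan is to make the rectangle and the chain decomposition explicit and then read the claimed decomposition off the prefix structure of lower sets inside a chain. Write $R=[a_1,a_2]\times[b_1,b_2]$, and fix a decomposition of $P$ into two chains $c_1<c_2<\cdots$ and $d_1<d_2<\cdots$. The one elementary fact needed is that, since each chain is totally ordered, the intersection of any lower set $L$ with a chain is a prefix of that chain: if $(x,y)$ is the grid point of $L$, then $L$ contains exactly $c_1,\dots,c_x$ from the first chain and $d_1,\dots,d_y$ from the second. In these terms, the set $A$ of elements ``below or to the left of $R$'' is $\{c_i : i\le a_1\}\cup\{d_j : j\le b_1\}$, the set $W$ of elements ``within $R$'' is $\{c_i : a_1<i\le a_2\}\cup\{d_j : b_1<j\le b_2\}$, and these two sets are disjoint; write $P_W$ for the restriction of $P$ to $W$.

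First I would establish existence. Let $L$ be a set in $\subproblem(R)$, so its grid point $(x,y)$ satisfies $a_1\le x\le a_2$ and $b_1\le y\le b_2$. By the prefix description, $x\ge a_1$ and $y\ge b_1$ give $A\subseteq L$, while $x\le a_2$ and $y\le b_2$ give $L\setminus A\subseteq W$; so, setting $B:=L\setminus A$, we have $L=A\cup B$ with $A\cap B=\emptyset$, where $A$ depends only on $R$ and $B\subseteq W$. It remains to check that $B$ is a lower set of $P_W$: if $e'<_P e$ with $e\in B$ and $e'\in W$, then $e'\in L$ because $L$ is a lower set of $P$, and $e'\notin A$ because $A\cap W=\emptyset$, so $e'\in L\setminus A=B$; as the order of $P_W$ is the restriction of that of $P$, this is precisely the lower-set condition inside $P_W$.

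Uniqueness is then immediate: if $L=A\cup B'$ with $B'$ disjoint from $A$, then $B'=L\setminus A=B$, and $A$ is determined by $R$ alone. If the companion direction is also wanted — that every $A\cup B$ with $B$ a lower set of $P_W$ is itself a lower set of $P$ whose grid point lies in $R$ — it follows the same way, from the facts that $A\cup B$ meets each chain in a prefix of length in the required interval and that $B$ is a lower set. I do not anticipate a genuine obstacle; the content is bookkeeping. The points that need care are matching the informal phrase ``coordinate value below or to the left of $R$'' to the concrete index ranges $i\le a_1$ and $j\le b_1$, and noting that $A$ itself need not be a lower set of $P$ — it corresponds to the corner $(a_1,b_1)$, which can lie outside the orthogonally convex region of grid points — so the argument must route through $A\subseteq L$ for the particular $L$ at hand rather than through any intrinsic property of $A$.
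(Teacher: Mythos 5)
Your proof is correct and is exactly the bookkeeping the paper leaves implicit (the statement appears as an unproved observation): the prefix structure of a lower set restricted to each chain gives $A\subseteq L$ and $L\setminus A\subseteq W$, and the lower-set property of $B=L\setminus A$ in the restricted order follows directly. One caution about your closing aside: the converse --- that every $A\cup B$ with $B$ a lower set of the restriction is a lower set of $P$ with grid point in $R$ --- does \emph{not} follow ``the same way,'' since a union of chain prefixes can violate a cross-chain order relation; it holds only under the extra hypothesis (made explicitly in the paper's next observation) that every integer point of $R$ corresponds to a lower set of $P$.
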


In this decomposition, the same set $A$ forms a subset of every set in $\subproblem(R)$. Thus, the observation shows that the contribution to the parametric closure problem from rectangle $R$ can be obtained by solving a smaller parametric closure problem, on the restriction of the partial order to the elements within~$R$, and then translating the results by a single vector determined from this fixed set $A$.

\begin{observation}
\label{obs:w2-full}
Suppose that every integer point of a grid rectangle $R$ corresponds to a downset of $P$.
Then the restriction of $P$ to the elements whose coordinate value is within~$R$ is a partial order in the form of the parallel composition of two chains, a special case of a series-parallel partial order. It follows from \autoref{cor:sp-complexity} that $\hull(\project(\subproblem(R)))$ has complexity proportional to the perimeter of~$R$.
\end{observation}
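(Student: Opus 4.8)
The plan is to establish the structural claim first---that the restriction of $P$ to the elements inside $R$ is a parallel composition of two chains---and then read off the complexity bound from the preceding observations and Corollary~\ref{cor:sp-complexity}. I would set up notation as follows. By Dilworth's theorem $P$ is the union of two chains $c^1_1<c^1_2<\cdots$ and $c^2_1<c^2_2<\cdots$. Because the elements of any lower set that come from a single chain form a prefix of that chain, the coordinate map $L\mapsto(x,y)$ is a bijection from $\lowersets(P)$ onto the polyomino of lower sets, and the lower set with coordinates $(x,y)$, when it exists, is exactly $\{c^1_1,\dots,c^1_x\}\cup\{c^2_1,\dots,c^2_y\}$. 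Writing $R=[x_1,x_2]\times[y_1,y_2]$, the elements ``whose coordinate value is within $R$'' are then the $c^1_i$ with $x_1<i\le x_2$ and the $c^2_j$ with $y_1<j\le y_2$, while the set $A$ of the preceding observation (elements below or to the left of $R$) is $\{c^1_1,\dots,c^1_{x_1}\}\cup\{c^2_1,\dots,c^2_{y_1}\}$.

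The heart of the argument is to rule out any order relation of $P$ between a within-$R$ element of the first chain and one of the second. Suppose $c^1_i\le c^2_j$ with $x_1<i\le x_2$ and $y_1<j\le y_2$. I would look at the corner $(x_1,y_2)$ of $R$: by hypothesis it is the coordinate pair of some lower set $L$, which contains $c^2_j$ (since $j\le y_2$) but not $c^1_i$ (since $i>x_1$), contradicting that $L$ is a lower set. The opposite relation $c^2_j\le c^1_i$ is excluded symmetrically using the corner $(x_2,y_1)$. Hence the restriction of $P$ to the within-$R$ elements has no relations across the two sub-chains, so it is the parallel composition of $c^1_{x_1+1}<\cdots<c^1_{x_2}$ and $c^2_{y_1+1}<\cdots<c^2_{y_2}$, a series-parallel order with $(x_2-x_1)+(y_2-y_1)$ elements.

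To finish, I would transfer this to $\subproblem(R)$. The preceding observation gives an injection $A\cup B\mapsto B$ from $\subproblem(R)$ into the lower sets of the restricted order; conversely, any lower set $B$ of the parallel composition of the two chains is a prefix of each chain, so $A\cup B$ has coordinate pair inside $R$ and (by the hypothesis) is a lower set of $P$, so the map is onto. Thus $\project(\subproblem(R))$ is exactly the translate by $w(A)=\sum_{a\in A}w(a)$ of $\project(\lowersets(P'))$ for $P'$ the parallel composition of the two chains, and $\hull(\project(\subproblem(R)))$ is a translate of $\polygon(P')$. Corollary~\ref{cor:sp-complexity} then bounds its vertex count by $2\bigl((x_2-x_1)+(y_2-y_1)\bigr)$, which is the perimeter of $R$.

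The steps about prefixes, cardinalities, and translates are routine; the only delicate point is that the correspondence between $\subproblem(R)$ and the lower sets of the restricted order must be a genuine bijection rather than just an inclusion, since the convex hull of a proper subset of a planar point set can have more vertices than the hull of the whole set. This is precisely where the hypothesis that every grid point of $R$ corresponds to a lower set of $P$ gets used, so I would be careful to state that direction explicitly.
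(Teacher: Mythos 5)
Your proposal is correct and follows exactly the route the paper intends (the paper states this as an Observation without writing out a proof): use the coordinate characterization of lower sets of a two-chain decomposition, derive the absence of cross-chain relations inside $R$ from the corner points $(x_1,y_2)$ and $(x_2,y_1)$, combine with the preceding observation's $A\cup B$ decomposition, and apply Corollary~\ref{cor:sp-complexity}. Your remark that surjectivity of the correspondence onto the lower sets of the restricted order is the point where the all-grid-points hypothesis is genuinely needed is a worthwhile detail that the paper leaves implicit.
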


\begin{theorem}
If $P$ is a partial order of width~two, then $\polygon(P)$ has complexity $O(n\log n)$ and can be constructed from the covering relation of~$P$ in time $O(n\log^2 n)$.
\end{theorem}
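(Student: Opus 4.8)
The plan is to follow closely the quadtree argument that proves Theorem~\ref{thm:semiorder}, adapting it to the width-two setting. The role played there by the two monotone curves $i>j$ and $u_{i-1}<u_j-1$ is played here by the boundary of the orthogonally convex region $\mathcal R\subseteq[0,n]^2$ consisting of the grid points that encode lower sets of $P$; the role played there by the ``powerset of the free set'' base case is played here by the ``parallel composition of two chains'' base case supplied by the third Observation above. One simplification is that the map from lower sets of $P$ to grid points is injective, since a lower set is the disjoint union of prefixes of the two chains and hence is determined by $(x,y)$; so in the recursion there is no free-set machinery at all, and only the fixed translation of the second Observation appears, and only at the base case.

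Concretely: as preprocessing I would recover from the covering relation a decomposition of $P$ into two chains (it exists by Dilworth's theorem~\cite{Dil-AM-50}) and compute, for each column $x$, the contiguous interval of rows $y$ with $(x,y)\in\mathcal R$; since these intervals have non-decreasing lower and upper endpoints, $\partial\mathcal R$ is a union of $O(1)$ chains, each monotone in both coordinates. After padding $n$ up to a power of two, I would build the quadtree of $[0,n]^2$ and, for every quadtree square $s$ of side $\ell$, compute $H_s=\hull(\project(\subproblem(s)))$ by three cases mirroring Theorem~\ref{thm:semiorder}: if $s\cap\mathcal R=\emptyset$ then $H_s=\emptyset$; if $s\subseteq\mathcal R$ then by the second and third Observations $\subproblem(s)$ consists of the sets $A\cup B$ with $A$ fixed and $B$ a lower set of a parallel composition of two chains, so $H_s$ is a translate of a Minkowski sum of two convex polygons of $O(\ell)$ vertices, of total complexity $O(\ell)$ and constructible in time $O(\ell\log\ell)$ using Lemma~\ref{lem:combine-polygons}; otherwise I would subdivide $s$ into its four children $s_0,\dots,s_3$, observe that the families $\subproblem(s_i)$ partition $\subproblem(s)$, and hence set $H_s=H_{s_0}\Cup H_{s_1}\Cup H_{s_2}\Cup H_{s_3}$ using Lemma~\ref{lem:combine-polygons} in time linear in the total complexity of the four child polygons. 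The single-grid-point squares form the base case of the induction.

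For the bounds I would argue, exactly as in Theorem~\ref{thm:semiorder}, by induction that $H_s$ has $O(\sum_t\ell_t)$ vertices, the sum taken over descendants $t$ of $s$ falling into the ``$s\subseteq\mathcal R$'' case, and is built in time $O(\sum_t\ell_t\log\ell_t)$ plus the merge costs. The crucial counting fact is that a square of side $\ell$ is processed only when its parent, of side $2\ell$, straddles $\partial\mathcal R$; since $\partial\mathcal R$ is $O(1)$ monotone chains, it crosses only $O(n/\ell)$ squares of each size, so there are $O(n/\ell)$ processed squares of side $\ell$ at each of the $O(\log n)$ levels. Summing the complexities gives $\sum_\ell O(n/\ell)\cdot O(\ell)=O(n\log n)$; summing the base-case construction times and charging each merge cost to the $O(\log n)$ split ancestors of a base-case square gives $O(n\log^2 n)$.

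I expect the only real obstacles to be two bookkeeping points, since the skeleton is that of the semiorder proof: first, checking carefully that a square fully inside $\mathcal R$ genuinely restricts $P$ to a parallel composition of two chains (so that the third Observation applies) and that the translation by $\project(A)$ is carried along for free when these polygons are later combined by $\Cup$ at ancestor squares; and second, confirming that orthogonal convexity of $\mathcal R$ indeed forces $\partial\mathcal R$ into $O(1)$ monotone staircases, which is what yields the $O(n/\ell)$ per-level count. A minor additional point is to verify that recovering the two-chain decomposition and the column intervals from the covering relation (whose Hasse diagram has only $O(n)$ edges) fits within the $O(n\log^2 n)$ time budget.
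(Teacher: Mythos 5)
Your proposal is correct and follows essentially the same approach as the paper: a quadtree decomposition of the orthogonally convex grid region of lower sets, with squares fully inside the region handled via the parallel-composition-of-two-chains observation and the monotone boundary yielding the $O(n/\ell)$-squares-per-level count. The only (immaterial) differences are at the assembly stage --- the paper precomputes the chain-interval hulls hierarchically so each square's Minkowski sum costs $O(\ell)$, and finishes with a single flat convex hull of all $O(n\log n)$ polygon vertices rather than your recursive $\Cup$ merge up the quadtree --- and both variants meet the stated bounds.
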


\begin{proof}
From the covering relation, we can determine a partition into two chains and trace out the boundaries of the orthogonal grid polygon describing the downsets of~$P$, in linear time.
We use a quadtree to partition the downsets of $P$ into squares, each with a side length that is a power of two, such that each grid point within each square corresponds to one of the downsets of~$P$. The squares of each size form two monotone chains within the grid, so the total perimeter of all of these squares is~$O(n\log n)$.

We also use the one-dimensional projection of the same quadtree, to partition each of the two grid coordinates recursively into subintervals whose sizes are powers of two. For each subinterval~$I$, we compute the convex hull of the downsets of the corresponding chain whose grid points lie within~$I$, as the hull of the union of the two previously-constructed polygons for the two halves of~$I$. This computation takes time $O(n\log n)$ overall. 
We also compute a sequence of vectors, the sums of weights of each prefix of the coordinates, in time $O(n)$.

We then compute the polygon for each of the squares of the partition of the downsets of $P$.
By \autoref{obs:w2-partial}, each such polygon can be constructed by
translating the polygon for the grid coordinates within the square (the set $B$ of the observation)
by a translation vector determined from all the smaller grid coordinates (the set $A$ of the observation).
By \autoref{obs:w2-full}, the polygon for the set $B$ can be computed in time proportional to the perimeter of the square, as the Minkowski sum of the polygons for its two defining subintervals.
The translation vector for the set $A$ can be found as the sum of two vectors computed for the two corresponding prefixes of the two grid coordinates.

The overall solution we seek, $\polygon(P)$, can be computed by applying the $\Cup$ operation to combine the polygons computed for each square of the partition. As the total perimeter of these squares is $O(n\log n)$, and each generates a polygon of complexity proportional to its perimeter in time proportional to its perimeter, the total time to compute all of these polygons is $O(n\log n)$, and their total complexity is $O(n\log n)$. As the convex hull of $O(n\log n)$ points, the time to perform the final $\Cup$ operation to combine these polygons and compute $\polygon(P)$ is $O(n\log^2 n)$.
\end{proof}

For higher widths~$w$, the same idea works (using an octree in three dimensions, etc). The total complexity of $\polygon(P)$ is proportional to the sum of side lengths of octree squares, $O(n^{w-1})$, and the construction time is within a logarithmic factor of this bound.

\section{Conclusions}

We have introduced the parametric closure problem, and given polynomial complexity bounds and  algorithms for several important classes of partial orders. Bounds for the general problem, and nontrivial lower bounds on the problem, remain open for further research.

\section*{Acknowledgements}
This research was supported in part by NSF grant 1228639 and
ONR grant N00014-08-1-1015.

{\raggedright
\bibliographystyle{abuser}
\bibliography{closure}}

\end{document}